\newtheorem{prop}{Proposition}
\title{Bifurcation analysis of the predator-prey model with the Allee effect in the predator}
\date{} 					
\author{Deeptajyoti Sen \\
	Department of Physical Sciences\\
	IISER Mohali\\
	Manauli, Punajb-140306 \\
	\texttt{deeps.sen.25@gmail.com} \\
	\And
	Andrew Morozov \thanks{Corresponding Author}\\
	Department of Mathematics\\
	University of Leicester\\
	Leicester, LE1 7RH \\
	\texttt{am379@leicester.ac.uk} \\
	\AND
	S. Ghorai \\
	Department of Mathematics \& Statistics \\
	IIT Kanpur \\
	Kalyanpur, Kanpur - 208016\\
	\texttt{sghorai@iitk.ac.in} \\
	\And
	Malay Banerjee \\
	Department of Mathematics \& Statistics\\
	IIT Kanpur \\
	Kalyanpur, Kanpur - 208016\\
    \texttt{malayb@iitk.ac.in} \\
}
\begin{document}
\maketitle
\begin{abstract}
The use of predator-prey models in theoretical ecology has a long history, and the model equations have largely evolved since the original Lotka-Volterra system towards more realistic descriptions of the processes of predation, reproduction and mortality. One important aspect is the recognition of the fact that the growth of a population can be subject to an Allee effect, where the per capita growth rate increases with the population density. Including an Allee effect has been shown to fundamentally change predator-prey dynamics and strongly impact species persistence, but previous studies mostly focused on scenarios of an Allee effect in the prey population. Here we explore a predator-prey model with an  ecologically important case of the Allee effect in the predator population where it occurs in the numerical response of predator without affecting its functional response. Biologically, this can result from various scenarios such as a lack of mating partners, sperm limitation and cooperative breeding mechanisms, among others. Unlike previous studies, we consider here a generic mathematical formulation of the Allee effect without specifying a concrete parameterisation of the functional form, and analyse the possible local bifurcations in the system. Further, we explore the global bifurcation structure of the model and its possible dynamical regimes for three different concrete parameterisations of the Allee effect. The model possesses a complex bifurcation structure: there can be multiple coexistence states including two stable limit cycles. Inclusion of the Allee effect in the predator generally has a destabilising effect on the coexistence equilibrium. We also show that regardless of the parametrisation of the Allee effect, enrichment of the environment will eventually result in extinction of the predator population.
\end{abstract}

\keywords{Allee effect in predator \and Partially specified models \and Stability \and Bifurcation \and Extinction}

\section{Introduction}
\label{intro}

Modelling predator-prey interactions has always been a mainstream area in mathematical biology and theoretical ecology. Our models have evolved tremendously since the famous Lotka-Volterra system, with one realistic modification being the introduction of non-monotonous per capita growth rates to the interacting species, as opposed to the monotonically decreasing per capita growth rate seen in the logistic equation. For instance, it is currently well recognised that the growth of natural populations can be subjected to the so-called Allee effect, where the per capita growth rate increases at low species densities  \cite{courchamp2008allee,fowler2002population}. The Allee effect can emerge at the population level due to a variety of mechanisms including enhancement in foraging efficiency, reproductive facilitation, collective defense and the modification of environmental conditions by organisms \cite{berec2007multiple,courchamp2008allee,fowler2002population}. There exist two types of Allee effect: weak and strong Allee effects. The weak Allee effect describes situations in which the per capita growth rate is increasing at small densities, but which nonetheless remains positive for low, nonzero population densities, while a strong Allee effect is characterized by a negative population growth at low densities since reproduction cannot compensate mortality rate. It has been demonstrated that including the Allee effect in predator-prey models has a strong impact on dynamics, in particular promoting population collapse and a further species extinction \cite{boukal2007predator,hilker2010population,lewis1993allee,morozov2006spatiotemporal,sen2012bifurcation}. In previous theoretical works, however, the main focus has been the scenario where there is an Allee effect in the growth rate of the prey rather than that of the predator. The scenarios where the predator growth is subject to the Allee effect are explored in the literature only partially. The aim of this paper is to contribute to bridging the gap.

The existing literature on the Allee effect in predators is scarce, and mainly focused on foraging facilitation among predators which occurs as a result of cooperative hunting  \cite{alves2017hunting,berec2010impacts,cosner1999effects,senallee}. Mathematically, this implies that the functional response of the predator is an increasing function of the predator density. In particular, it was shown that it might be detrimental for cooperative hunters to be too efficient in catching prey since this may cause resource over-exploitation and eventual extinction of the predator \cite{alves2017hunting,senallee}. On the other hand, the Allee effect can occur in predators due to other mechanisms such as low fertilization efficiency, a lack of mating partners, sperm limitation and cooperative breeding mechanisms \cite{berec2007multiple,courchamp2008allee,dennis1989allee}. From the modelling point of view, including an Allee effect in this case should affect the numerical response of the predator, since the food conversion efficiency becomes an increasing function of predator density,  while the functional response remains unchanged. As such, the model properties and ecological predictions will be different compared to the case of the foraging facilitation scenario. Some studies have considered the Allee effect in predators due to non-foraging mechanisms,  but none of them have been studied exhaustively in terms of the bifurcation structure, possible dynamical regimes and the role of parameterisations of the Allee effect in the model equations \cite{costa2018multiple,zhou2005stability}. The latter problem may be a general issue in ecological modelling and is related to so-called structural sensitivity, which is briefly described below.

In many ecological models, predator-prey systems in particular, there is often an uncertainty regarding which precise mathematical formulation of the model functions we need to implement in the model equations \cite{adamson2013can}. It is often impossible to determine which particular function we need to use in the model equations to describe predation, growth, mortality, competition, etc. Several parameterisations can fit available empirical data well, and different mathematical formulations can have a valid biological rationale \cite{flora2011structural}. Furthermore, implementation of close mathematical functions (both in terms of functional forms and their derivatives) in the same predator-prey model may result in different outcomes, in particular in topologically distinct bifurcation structures yielding different dynamical regimes \cite{adamson2013can,adamson2014bifurcation}. This property is called the structural sensitivity of biological models \cite{adamson2013can,adamson2014bifurcation,adamson2014defining,seo2018sensitivity}. Structural sensitivity may cause major problems in terms of generality of results obtained using specific concrete formulations of model functions such as growth rates or functional responses \cite{adamson2014bifurcation,aldebert2019three}. A possible way to address structural sensitivity is to allow for an unspecified formulation of some functions in the model equations with other functions being fixed, an approach is known as partially specified modelling \cite{wood1999super}. Implementation of the partially specified models approach is especially relevant for systems with the Allee effect in predators since this phenomenon is often caused by a variety of mechanisms, and is thus hard to describe by a single universal functional relation \cite{courchamp2008allee}. Moreover, the Allee effect can depend on the spatial scale of modelling, in which case the use of a single specific mathematical formulation for the dependence of the numerical response on the overall predator density is highly questionable \cite{courchamp2008allee}.

In this paper we explore a predator-prey model with an Allee effect in the predator which affects the numerical response of the predator without affecting its functional response. We consider a partially specified model, where the mathematical formulation of a strong Allee effect has only a few generic constraints to its shape. We explore the bifurcation structure of the model including saddle-node, Hopf, generalised Hopf and Bogdanov-Takens bifurcations of co-dimensions two and three. Then we construct and compare full bifurcation portraits obtained for three possible parameterisations of the Allee effect: the hyperbolic (Monod), exponential (Ivlev) and trigonometric formulations. We demonstrate that the model may exhibit structural sensitivity with respect to parameterisation of the Allee effect function. We find that adding the Allee effect  results in emergence of multiple non-trival attractors in the system which can potentially explain some empirically observed alternative states in ecosystems. We argue that the Allee effect in the predator growth has a large destabilising effect on population dynamics, which has been somehow neglected previously.

\section{Model Formulation} \label{sec:2}

We consider a Gause type prey-predator ODE model with a specialist predator \cite{Bazykin1998,Hsu2001,Kuang1988,Turchin2003}. The model equations read as follows
\begin{subequations} \label{eq:1}
\begin{align}
\begin{split}
\frac{dN}{dT}&=Nf(N)-g(N)P,
\end{split}\\
\begin{split}
\frac{dP}{dT}&=e\psi(P)g(N)P - \mu P,
\end{split}
\end{align}
\end{subequations}
where $N$ and $P$ are the population densities of prey and predator, respectively, at time $T$.

The function $f(N)$ is the per capita growth rate of the prey which we consider here to be logistic, i.e., $f(N)= r(1-\frac{N}{K})$ and $\mu$ is the intrinsic death rate of the predator which is assumed to be constant. Functional response of predator (the rate of food consumption per predator) which we consider here to be of Holling type II and we use the following parametrisation of $g(N)$ known as the Holling disk equation \cite{jost1999deterministic}
$$g(N)\,=\,\frac{aN}{1+aqN}.$$ 

In this model, we incorporate the Allee effect in the numerical response of the predator by assuming that its food conversion efficiency $e\psi(P)$ is a function of predator density. This is different from previous models where the Allee effect was also included in the functional response of the predator \cite{alves2017hunting,cosner1999effects,senallee}. The maximum food conversion coefficient is given by $e\, (0 < e<1)$ and this value is reached at high $P$. For simplicity, we neglect direct competitive effects and interference within the predator population. We assume that the reduction of the overall growth rate at high predator densities occurs solely due to over-exploitation of food, i.e., due to a decrease in $N$. At low predator density, the per capita reproduction rate becomes smaller, which is described by the function $\psi(P)$. Biologically, this can occur via a multitude of mechanisms. For example, due to low fertilization efficiency, lack of mating partners, sperm limitation, cooperative breeding mechanisms, etc \cite{berec2007multiple,courchamp2008allee}. Note this is a well-known approach to modelling the Allee effect by including the density dependence in the reproduction term and keeping the mortality term constant \cite{boukal2002single,morozov2016long}.


Mathematically, the function $\psi(P)$ expressing the Allee effect in the predator is considered to possess the following properties:
 \begin{enumerate}
\item[(A1)] $\psi(0) = 0$, since at very low densities the population cannot reproduce due to lack of mating opportunities;\\
\item[(A2)] $0 \leq \psi(P) \leq 1$ since $\psi(P)$ represents the proportion of the maximal possible conversion rate $e$; \\
\item[(A3)] $\psi(P)$ is an increasing function of $P$, so $\psi^{'}(P) > 0$ for all $P \geq 0$, and we do not include effects of intraspecific competition;\\
\item[(A4)] $\psi^{''}(P) < 0$ for all $P \geq 0$ which signifies that the increase in the reproductive ability (population fitness), while the population size $P$ is being increased, is monotonically decelerating;
\item[(A5)] $\psi(P) \rightarrow 1$ for large $P$ (we neglect intraspecific competition at high population sizes).
 \end{enumerate}

Next we reduce the number of parameters in the model by non-dimensionalisation and introduce the following non-dimensional variables $x\,=\,\frac{N}{K}$, $y\,=\,\frac{P}{krq}$ and $t\,=\,r \tau$ we can transform the equations (\ref{eq:1}) to
\begin{subequations} \label{eq:3}
\begin{align}
\begin{split}
\frac{dx}{dt}&=x\left(1-x\right)-\frac{xy}{\beta + x}\,\equiv\,F_{1}(x,y),
\end{split}\\
\begin{split}
\frac{dy}{dt}&=\frac{\alpha xy}{\beta + x}h(y) - my\,\equiv\,F_{2}(x,y),
\end{split}
\end{align}
\end{subequations}

\noindent with the following positive dimensionless parameters $\alpha\,=\,\frac{e}{qr}$, $\beta\,=\,\frac{1}{aqK}$
and $m\,=\,\frac{\mu}{r}$.

In the above model, the function $\psi(P)$ is transformed into a dimensionless function $h(y)$ with the same constraints as are imposed on $\psi(P)$. As we mentioned in the Introduction, we will explore the basic properties of the model for an arbitrary mathematical formulation of $h(y)$ (model equilibria, stability, possible generic bifurcation, etc), i.e., considering the above system as a partially specified model. We will also consider some concrete parameterisations of $h(y)$ such as $h(y)=\frac{y}{\delta + y}$ (Monod parametrisation), $h(y)=1-e^{-\frac{y}{\delta}}$ (Ivlev parametrisation) and $h(y)=\tanh(\frac{y}{\delta})$  (hyperbolic tangent parameterisation) to construct a full bifurcation portrait and explore the sensitivity of the model dynamics to mathematical formulation of the Allee effect. Finally, we verify how sensitive the model is with respect to small perturbations of $h(y)$ which still preserve assumptions (A1)-(A5).

\section{Model equilibria and their stability}

\subsection{Possible equilibria in the system}

We start our investigation by exploring the number and the location of system equilibria for an arbitrary formulation of the Allee effect $h(y)$. It is easy to see that model (\ref{eq:3}) always has one trivial equilibrium point $E_{0}\,=\,(0,0)$ and one axial equilibrium point $E_{1}\,=\,(1,0)$.

An interior equilibrium point $E^{*}\,=\,(x^{*},y^{*})$ will be a point of intersection of the following two non-trivial nullclines in the interior of first quadrant
\begin{subequations}\label{eq:nullcline}
\begin{align}
\begin{split}
f^{1}(x,y) &\equiv\, 1-x-\frac{y}{x+\beta}\,=\,0,
\end{split}\\[0.5em]
\begin{split}
f^{2}(x,y) &\equiv\, \frac{\alpha x}{x+\beta}h(y)-m\,=\,0.
\end{split}
\end{align}
\end{subequations}
For the feasibility of $y^{*}$, we must have $0 < x^{*} <1$ (see
equation (\ref{eq:nullcline}a)) and from this condition we can verify that $0<y^*\leq \frac{(1+\beta)^2}{4}$. We solve (\ref{eq:nullcline}b) for $x$ to obtain the equation for the predator nullcline
\begin{eqnarray}\label{eq:xfeasible}
x &=& \frac{m \beta}{\alpha h(y)-m}.
\end{eqnarray}
For the feasibility of $x^{*}$, we must have $y^{*} >
h^{-1}(\frac{m}{\alpha})$. Since the Allee effect function $h(y)$ is
bounded by 1, from $h(y)=\frac{m(x+\beta)}{\alpha x}$ we find
$x^{*}
> \frac{m \beta}{\alpha -m}$ with $\alpha>m$. We differentiate equation (\ref{eq:xfeasible}) with respect to $y$ to obtain
$$\frac{dx}{dy}\,=\,-\frac{\alpha m \beta h^{'}(y)}{(\alpha h(y)-m)^2} <0 \quad \textup{as} \quad h^{'}(y) > 0.$$
For the second derivative of the predator nullcline we have
$$\frac{d^{2}x}{dy^{2}}\,=\,\frac{\alpha \beta m}{(\alpha h(y)-m)^2}\left[\frac{2 \alpha (h^{'}(y))^2}{\alpha h(y)-m}-h^{''}(y)\right] > 0,$$
as $h^{''}(y) < 0$ and $\alpha h(y) > m$.
\begin{figure}[ht]
\centerline{
\includegraphics[scale=0.5]{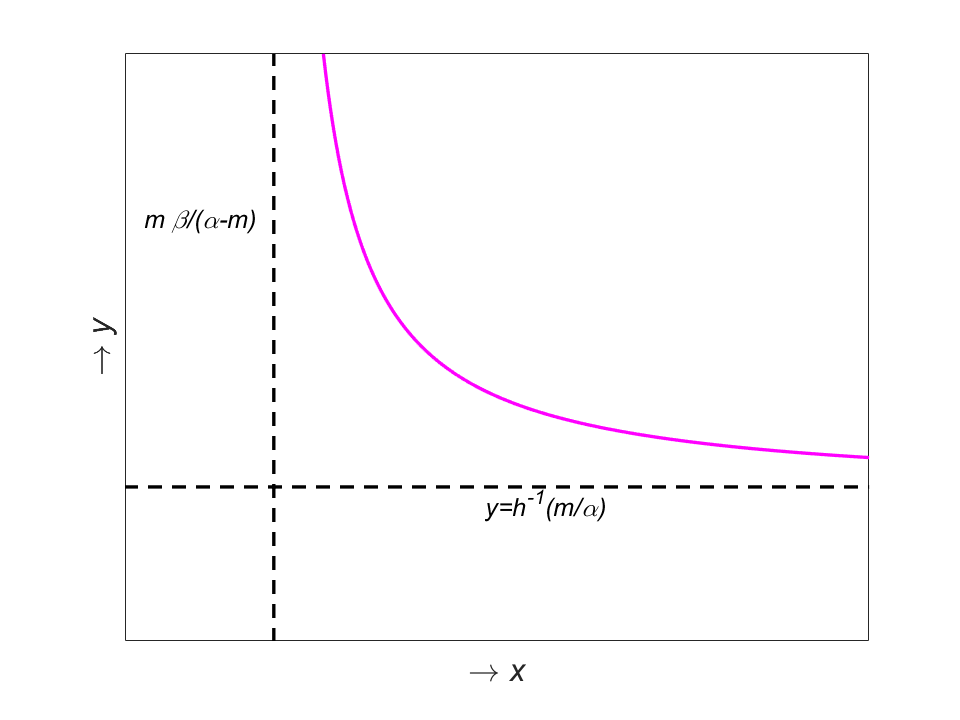}} \caption{Qualitative behviour of the predator nullcline in model (\ref{eq:3}).} \label{fig:predatorNC}
\end{figure}
\begin{figure}[ht]
\centerline{
\includegraphics[scale=0.5]{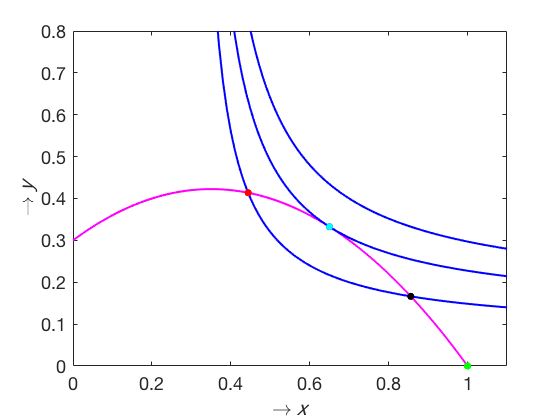}} \caption{Relative position of nullclines in model (\ref{eq:3}) constructed for the Allee effect parametrisation  given by the Monod function $h(y)=\frac{y}{\delta + y}$ with $\delta\,=\,0.4$ (no equilibrium), $\delta\,=\,0.315$ (a single equibrium), and $\delta\,=\,0.1$ (two equilibria points): one is a saddle point (black dot), the other is a topological focus (red dot). Other parameters are $\alpha\,=\,1.8$, $\beta\,=\,0.4$ and $m\,=\,0.5$} \label{fig:nullclines}
\end{figure}
From the above we derive that the predator nullcline  (\ref{eq:nullcline}b) is strictly decreasing as a function of $x$ and  it is always convex. Furthermore, the curve lies in the region where $x > \frac{m \beta}{\alpha -m}$ and $y
> h^{-1}(\frac{m}{\alpha})$. Taking into account the above properties, a possible shape of the predator nullcline is shown in
Fig.~\ref{fig:predatorNC}; the dashed lines represents the vertical and the horizontal asymptotes.


From the geometric properties of the nullclines one can see that there will be at most two points of intersection (between the non-trivial nullclines in the interior of the first quadrant) and so there can be at most two interior equilibria. An example of intersection of the model nullclines for the parameterisation of $h(y)$ given by $h(y)=\frac{y}{\delta + y}$ is shown in Fig.~\ref{fig:nullclines}. One can see that a gradual increase in $\delta$ (which defines characteristic predator densities at which the Allee effect has a pronounced strength) from small to large values results a saddle-node bifurcation which is described in detail in the next sections. Note that this property is observed for the other two parametrisations of $h(y)$ considered. Note that in the absence of the Allee effect, only one non-trivial equilibrium is possible, corresponding to the intersection of the vertical line $x=m\beta/(\alpha -m)$ and the prey nullcline.

\subsection{Stability of Equilibria}
Here  we explore the stability for the equilibria of model (\ref{eq:3}). The following proposition defines the stability of the axial equilibria.

\begin{prop}
For any choice of $h(y)$ satisfying assumptions (A1)-(A5)
\begin{itemize}
\item[(i)] the trivial equilibrium point $E_{0}$ is a saddle;\\
\item[(ii)] the axial equilibrium point $E_{1}$ is locally asymptotically stable.
\end{itemize}
\end{prop}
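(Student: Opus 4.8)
The plan is to compute the Jacobian of system~(\ref{eq:3}) and evaluate it at the two boundary equilibria $E_0=(0,0)$ and $E_1=(1,0)$, then read off the signs of the eigenvalues. Since both equilibria lie on the invariant axis $\{y=0\}$, the Jacobian will be (block) triangular there, so its eigenvalues are simply the two diagonal entries $\partial F_1/\partial x$ and $\partial F_2/\partial y$; the off-diagonal structure means no genuine $2\times2$ spectral computation is needed. Concretely, $\partial F_1/\partial x = 1-2x-\beta y/(x+\beta)^2$ and, crucially, $\partial F_2/\partial y = \alpha x h(y)/(x+\beta) + \alpha x y h'(y)/(x+\beta) - m$, while $\partial F_1/\partial y = -x/(x+\beta)$ and $\partial F_2/\partial x$ need not even be recorded for the triangular cases.

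For part~(i), at $E_0=(0,0)$ the diagonal entries become $\partial F_1/\partial x = 1 > 0$ and $\partial F_2/\partial y = -m < 0$ (using assumption (A1), $h(0)=0$, which kills the first two terms of $\partial F_2/\partial y$). Hence $E_0$ has one positive and one negative eigenvalue and is a saddle. For part~(ii), at $E_1=(1,0)$ one gets $\partial F_1/\partial x = 1-2 = -1 < 0$ and, again using $h(0)=0$, $\partial F_2/\partial y = -m < 0$; both eigenvalues are strictly negative, so $E_1$ is locally asymptotically stable (in fact a stable node). The only assumption doing real work here is (A1); the monotonicity and concavity assumptions (A2)--(A5) are not needed for this proposition, which is worth a remark.

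The step requiring the most care is writing $\partial F_2/\partial y$ correctly, since it contains the term $\alpha x y h'(y)/(x+\beta)$ coming from differentiating the product $y\,h(y)$; one must be sure that this term also vanishes at $y=0$ (it does, because of the explicit factor $y$), so that no hypothesis on $h'(0)$ — which could in principle be infinite for some parametrisations — is invoked. Beyond that the argument is a routine sign check; there is no substantive obstacle. One could optionally note that the stability of $E_1$ has the clean ecological reading that the predator-free state is always locally attracting, reflecting the strong Allee effect: a predator population introduced at low density near carrying capacity of the prey cannot invade.
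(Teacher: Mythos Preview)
Your proof is correct and follows essentially the same route as the paper: compute the Jacobian, evaluate it at $E_0$ and $E_1$, and read off the eigenvalues $(1,-m)$ and $(-1,-m)$ respectively, using $h(0)=0$. Your additional observations---that the Jacobian is triangular on $\{y=0\}$, that only assumption (A1) is actually used, and that the factor $y$ in $y h'(y)$ obviates any hypothesis on $h'(0)$---are all accurate and sharpen the presentation slightly, but the underlying argument is the same as the paper's.
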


\begin{proof}
The Jacobian matrix of model (\ref{eq:3}) at any point is given by

\begin{eqnarray}
\label{eq:jacobian}
J(x,y) &=& \begin{bmatrix}
1-2x-\frac{y}{\beta + x}+\frac{xy}{(\beta + x)^2} & -\frac{x}{\beta + x}\\[1em]
\frac{\alpha \beta y h(y)}{(\beta + x)^2} & \frac{\alpha x}{\beta + x}h(y) + \frac{\alpha x y}{\beta + x}h^{'}(y)-m
\end{bmatrix}.
\end{eqnarray}

\begin{itemize}
\item[(i)] The eigenvalues of the Jacobian matrix at $E_{0}$ are $1$ and $-m$. Therefore it is a saddle point irrespective of the choice of $h(y)$, having a stable manifold along $y$-axis and an unstable manifold along $x$-axis.

\item[(ii)] The axial equilibrium point $E_{1}$ is locally asymptotically stable (a stable node) as the eigenvalues of the Jacobian matrix are $-1$ and $-m$ for any choice of $h(y)$.
\end{itemize}
\end{proof}

An important conclusion is that, in the presence of an Allee effect in the predator, achieving a very low population densities by the predator will result in its eventual extinction, so the Allee effect is strong.

Next we explore the stability of the interior equilibria. As follows from the previous section, model (\ref{eq:3}) admits at most two interior equilibrium points which we  denote by $E_{1*}(x_{1*},y_{1*})$ and $E_{2*}(x_{2*},y_{2*})$ such that
$0\,<\,x_{1*}\,<\,x_{2*}\,<\,1$. The Jacobian matrix evaluated at $E_{*}\,=\,(x_{*},y_{*})$ can be expressed as
\begin{eqnarray}\label{eq:jacobian}
J(E^{*}) &= \begin{bmatrix}
xf^{1}_{x} & xf^{1}_{y}\\
yf^{2}_{x} & yf^{2}_{y}
\end{bmatrix}_{(x_{*},y_{*})},
\end{eqnarray}
where we have used $f^{1}(x_{*},y_{*})\,=\,0$ and $f^{2}(x_{*},y_{*})\,=\,0$.  Since $f^{1}$ and $f^{2}$ are smooth functions, we can differentiate both expressions (\ref{eq:nullcline}) to obtain
\begin{eqnarray*}
f^{1}_{x} &=& -f^{1}_{y}\frac{dy^{(f^{1})}}{dx}, f^{2}_{x} = -f^{2}_{y}\frac{dy^{(f^{2})}}{dx},
\end{eqnarray*}
where $\frac{dy^{(f^{1})}}{dx}$ and $\frac{dy^{(f^{2})}}{dx}$ are tangent lines to the nullclines $f^{1}(x,y)\,=\,0$ and $f^{2}(x,y)\,=\,0$, respectively. We substitute the above expressions into the Jacobian matrix
\begin{eqnarray}
J(E^{*}) &= \begin{bmatrix}
-xf^{1}_{y} \frac{dy^{(f^{1})}}{dx}  & xf^{1}_{y}\\
-yf^{2}_{y} \frac{dy^{(f^{2})}}{dx} & yf^{2}_{y}
\end{bmatrix}_{(x_{*},y_{*})}.
\end{eqnarray}
Therefore, for the determinant of the Jacobian we obtain
\begin{eqnarray}\label{eq:det}
\textup{Det}(J(E^{*}_{i})) &= \left[xyf^{1}_{y}f^{2}_{y}\left(\frac{dy^{(f^{2})}}{dx}-\frac{dy^{(f^{1})}}{dx}\right)\right]_{(x_{*},y_{*})}.
\end{eqnarray}
Now $f^{1}_{y}(x,y)\,=\,-\frac{1}{\beta + x} < 0$ and $f^{2}_{y}(x,y)\,=\,\frac{\alpha x}{\beta + x}h^{'}(y) > 0$ as $h^{'}(y) > 0$. Substituting the above derivatives we have
\begin{subequations}\label{eq:fracderi}
\begin{align}
\begin{split}
\frac{dy^{(f^{1})}}{dx} &= 1-\beta -2x \,=\,2(x_{m}-x)\end{split},\\
\begin{split}
\frac{dy^{(f^{2})}}{dx} &= -\frac{\beta h(y)}{\beta + x}\frac{1}{xh^{'}(y)} < 0 ,
\end{split}
\end{align}
\end{subequations}
where $x_{m}\,=\,\frac{1-\beta}{2}$ is the $x$-coordinate of the point where the nullcline $f^{1}(x,y)\,=\,0$ attains its maximum within the first quadrant.

Depending upon the positions of two points of intersections between the two nontrivial nullclines (cf. (\ref{eq:nullcline})) with respect to the point of maximum on the prey nullcline, we can consider following two cases,

\noindent \textbf{Case:1}\, $0 < x_{m} < x_{1*} < x_{2*} < 1$ and \noindent \textbf{Case:2}\, $0 < x_{1*} < x_{m} < x_{2*} < 1$.

\begin{figure}[ht]
\centerline{
\includegraphics[scale=0.5]{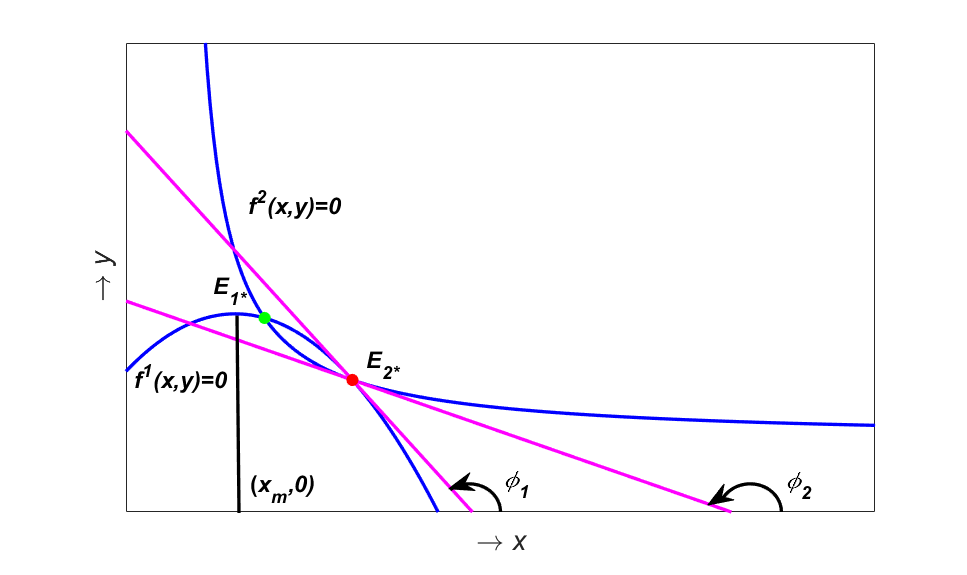}} \caption{ Determing the type and stability of interior equilibria in model (\ref{eq:3}). For detail see the text.} \label{fig:Nullcline_Tangent1}
\end{figure}

\noindent \textbf{Case:1}\, Suppose $\phi_1$ and $\phi_2$
are angles made by the tangents to $f^1(x,y)\,=\,0$ and
$f^2(x,y)\,=\,0$ at $E_{2*}$. Then from Fig.~\ref{fig:Nullcline_Tangent1} one can see that $\frac{\pi}{2}\,<\,\phi_1\,<\,\phi_2\,<\,\pi$ which implies
\begin{eqnarray}
\left.\frac{dy^{(f^2)}}{dx}\right|_{E_{2*}}\,>\,\left.\frac{dy^{(f^1)}}{dx}\right|_{E_{2*}}.
\end{eqnarray}
Therefore from the above inequality and expression (\ref{eq:det}) we get $\textup{Det}(J(E_{2*})) < 0$. Hence $E_{2*}$ is a saddle  point.

We can proceed in a similar fashion and use the fact that (this is not shown in Fig. ~\ref{fig:Nullcline_Tangent1} for brevity)
\begin{eqnarray}
\left.\frac{dy^{(f^1)}}{dx}\right|_{E_{1*}}\,>\,\left.\frac{dy^{(f^2)}}{dx}\right|_{E_{1*}}
\end{eqnarray}
and to prove that $\textrm{Det}\left(J(E_{1*})\right)\,>\,0$, i.e., $E_{1*}$ is not a saddle point. The stability of $E_{1*}$ is determined by the sign of
$f_x^1(x_{1*},y_{1*})+f_y^2(x_{1*},y_{1*})$. Note that
\begin{subequations}
\begin{eqnarray}
f_x^1(x_{1*},y_{1*})&=&2(x_{m}-x_{1*}) < 0 \quad \textup{as} \quad x_{m} < x_{1*},\\
f_y^2(x_{1*},y_{1*})&=&\frac{\alpha x_{1*}h'(y_{1*})}{\beta + x_{1*}}\,>\,0.
\end{eqnarray}
\end{subequations}
Hence $E_{1*}$ is locally asymptotically stable if
\begin{eqnarray}
f_x^1(x_{1*},y_{1*})+f_y^2(x_{1*},y_{1*})\,<\,0.
\end{eqnarray}

\textbf{Case:2} In this case we can also prove that $E_{2*}$ is a saddle point proceeding in a similar manner as above. For the stability of $E_{1*}$ we have from (\ref{eq:fracderi}a), $\left. \frac{dy^{(f^1)}}{dx}\right|_{E_{1*}} > 0$. Hence we get from (\ref{eq:det}), $\textup{Det}(J(E_{1*})) > 0$. Also
\begin{subequations}
\begin{eqnarray}
f_x^1(x_{1*},y_{1*})&=&2(x_{m}-x_{1*}) > 0 \quad \textup{as} \quad x_{m} > x_{1*},\\
f_y^2(x_{1*},y_{1*})&=&\frac{\alpha x_{1*}h'(y_{1*})}{\beta + x_{1*}}\,>\,0.
\end{eqnarray}
\end{subequations}
As $\textup{tr}(J(E_{1*}))\,=\,f_x^1(x_{1*},y_{1*}) + f_y^2(x_{1*},y_{1*}) > 0$, hence $E_{1*}$ is unstable.

To conclude, the interior equilibrium $E_{2*}$ is always a saddle point, whereas $E_{1*}$ is a topological focus which depending on parameters can be either stable or unstable.

\section{Local bifurcations in the model}

Here we consider possible local bifurcations in model (\ref{eq:3}).

\subsection{Saddle-node bifurcation}

Suppose $\overline{E}(\overline{x},\,\overline{y})$ is the point at which two non-trivial nullclines touch each other in the first quadrant when a bifurcation parameter of the model is being varied.  The slope of the tangents to the curves at $\overline{E}$ become equal. This signifies that
\begin{eqnarray}
\left.\frac{dy^{(f^1)}}{dx}\right|_{(\overline{x},\,\overline{y})}\,=\,
-\left.\frac{f_x^1}{f_y^1}\right|_{(\overline{x},\,\overline{y})}\,=\,
-\left.\frac{f_x^2}{f_y^2}\right|_{(\overline{x},\,\overline{y})}\,=\,
\left.\frac{dy^{(f^2)}}{dx}\right|_{(\overline{x},\,\overline{y})},
\end{eqnarray}
and hence we have
\begin{eqnarray}
\left.f_x^1f_y^2-f_y^1f_x^2\right|_{(\overline{x},\,\overline{y})}\,=\,0.
\end{eqnarray}
In this case $\textrm{Det}\left(J(\overline{E})\right)\,=\,0$ and
$\overline{E}$ becomes a non-hyperbolic equilibrium point. This situation corresponds to a saddle-node bifurcation in the model.  We explore this bifurcation in more detail.

As an example, we consider $m$ to be the bifurcation parameter and denote by
$m\,\equiv\,m_{SN}$ the bifurcation  point. The eigenvectors of both the matrix $J(\bar{E})$ and its transpose corresponding to the zero
eigenvalue are, respectively given by  $v\,=\,[1,1-\beta-2\bar{x}]^{t}$ and $w\,=\,[\alpha
\bar{y}h^{'}(\bar{y}),1]^{t}$. We need to check the transversality conditions for a saddle-node bifurcation \cite {perko2013differential}. We denote
$F\,=\,(F_{1}(x,y),F_{2}(x,y))^{t}$ and we further follow the same notation of \cite {perko2013differential} to obtain
\begin{eqnarray*}
w^{t}F_{m}(\bar{E};m=m_{SN}) &=& -1\,\neq\,0,\\
w^{t}D^{2}F(\bar{E};m=m_{SN})(v,v) &=& -\frac{2\alpha \bar{x}\bar{y}h^{'}(\bar{y})}{\beta + \bar{x}}-\frac{\beta \bar{y}h(\bar{y})}{(\beta + \bar{x})^2}\\
 & & \left[2 + \frac{2\beta}{\bar{x}(\beta + \bar{x})} - \frac{\beta h(\bar{y})h^{''}(\bar{y})}{\bar{x}(\beta + \bar{x})(h^{'}(\bar{y}))^2}\right] \,<\,0,
\end{eqnarray*}
as $h^{''}(y) < 0$. Hence the transversality conditions are always satisfied and variation of $m$  results in a saddle-node bifurcation. Similar results can be obtained by varying other model parameters.

\subsection{Hopf Bifurcation}

In the previous subsection we show that the two interior equilibrium points are generated through a saddle-node bifurcation. The non-saddle interior equilibrium ($E_{1*}$) can be stable or unstable depending on model parameters. It loses its stability when the sign of the trace of the Jacobian matrix has changed through zero (from negative to positive) via a Hopf bifurcation. In this section we show that system (\ref{eq:3}) undergoes a Hopf bifurcation when a model parameter is varied. Here we choose $\beta$ as a bifurcation parameter. The Jacobian matrix at $E_{1*}$ is given by

\begin{eqnarray*}
J(E_{1*}) &=& \begin{bmatrix}
\frac{x_{1*}}{\beta+x_{1*}}(1-\beta-2x_{1*}) & -\frac{x_{1*}}{\beta+x_{1*}}\\[0.5em]
\frac{\alpha \beta y_{1*}h(y_{1*})}{(\beta+x_{1*})^2} & \frac{\alpha x_{1*}y_{1*}}{\beta+x_{1*}}h^{'}(y_{1*}).
\end{bmatrix}
\end{eqnarray*}

\noindent Now let assume that $\beta\,=\,1-2x_{1*}+\alpha y_{1*}h^{'}(y_{1*})\,\equiv\,\beta_{H}$. This is an implicit expression for $\beta$ as the components of the equilibrium point contain $\beta$ as well. Now we assume that the following three conditions are satisfied at $\beta\,=\beta_H$,

\begin{itemize}
\item[(H1)] $T_{H}\,=\,\textup{tr}(J(E_{1*};\beta\,=\,\beta_{H}))\,=\,0$,\\
\item[(H2)] $\Delta_{H}\,=\,\textup{det}(J(E_{1*});\beta\,=\,\beta_{H}) > 0$,\\
\item[(H3)] If $\lambda(\beta)$ is the complex eigenvalue of $J(E_{1*})$ then $\left.\frac{d}{d\beta}\left(\textup{Re}(\lambda(\beta))\right)\right|_{\beta=\beta_H}\,\neq\,0.$
\end{itemize}

Then $E_{1*}$ loses its stability through a Hopf bifurcation at $\beta\,=\,\beta_{H}$.\\

Assuming $\textup{Re}(\lambda(\beta))$ is real part of a complex eigenvalue of $J(E_{1*})$, we can write,
$$\textup{Re}(\lambda(\beta))=tr(J(E_{1*}))/2\,=\,\frac{x_{1*}}{2(\beta+x_{1*})}\left\lbrace1-\beta-2x_{1*}+\alpha y_{1*}h^{'}(y_{1*})\right\rbrace.$$
\noindent Now $\textup{Re}(\lambda(\beta))$ is equal to zero when $\beta\,=\,\beta_H$. Differentiating $\textup{Re}(\lambda(\beta))$ with respect to $\beta$ we find that
\begin{eqnarray}
\label{eq:eigen_thresh}
\frac{d}{d\beta}\{\textup{Re}(\lambda(\beta))\}&=&\frac{x_{1*}}{2(\beta + x_{1*})}\left\lbrace -1-2\frac{dx_{1*}}{d\beta}+\alpha\frac{dy_{1*}}{d\beta}\left(h^{'}(y_{1*})+y_{1*}h^{''}(y_{1*})\right)\right\rbrace\nonumber\\
& & +\frac{\beta}{2(\beta+x_{1*})^2}\frac{dx_{1*}}{d\beta}\left\lbrace1-\beta-2x_{1*}+\alpha y_{1*}h^{'}(y_{1*})\right\rbrace.
\end{eqnarray}
\noindent Now as $E_{1*}$ satisfies (\ref{eq:nullcline}a) we have,
$$\frac{dy_{1*}}{d\beta}\,=\,1-x_{1*}+\frac{dx_{1*}}{d\beta}\left(1-2x_{1*}-\beta\right).$$
\noindent Finally using the fact $1-\beta-2x_{1*}+\alpha y_{1*}h^{'}(y_{1*})\,=\,0$ at $\beta\,=\,\beta_H$ and above result in (\ref{eq:eigen_thresh}) we get

\begin{eqnarray*}
\left.\frac{d}{d\beta}\{\textup{Re}(\lambda(\beta))\}\right|_{\beta\,=\,\beta_{H}} &=& \frac{x_{1*}}{2(\beta_{H} + x_{1*})}\left[-1+\alpha(1-x_{1*})\left(h^{'}(y_{1*})+y_{1*}h^{''}(y_{1*})\right)-\right.\\
 & & \left.\frac{dx_{1*}}{d\beta}\left(2-\alpha (1-2x_{1*}-\beta_{H})(h^{'}(y_{1*})+y_{1*}h^{''}(y_{1*}))\right)\right]_{\beta=\beta_H}.
\end{eqnarray*}

The above expression should be checked for the given mathematical formulation of the Allee effect $h(y)$. In particular, we have numerically verified that this quantity is non-zero at the Hopf bifurcation threshold for the parameterisations considered here: the Monod, Ivlev and trigonometric functions.

\subsection{Generalized Hopf (Bautin) bifurcation}

\noindent In this section we consider a co-dimension two bifurcation called a Bautin or generalized Hopf (GH) bifurcation. This bifurcation occurs when the interior non-saddle equilibrium has purely imaginary eigenvalues and the first Liapunov number becomes zero. We consider $\beta$ and $m$ as bifurcation parameters. Therefore in $\beta$-$m$ parametric plane, there is a critical point which lies on the Hopf bifurcation curve. In the next proposition we will show that the model undergoes a GH bifurcation by choosing $\beta$ and $m$ as bifurcation parameter.

\begin{prop}
Model (\ref{eq:3}) undergoes a Bautin  (generalized Hopf) bifurcation around the interior equilibrium point $\hat{E}\,=\,(\hat{x},\hat{y})$ at the bifurcation threshold $(\beta_{GH},m_{GH})$ whenever the following conditions hold
\begin{itemize}
\item[(GH1)] $\Delta_{GH}\,=\,\textup{det}(J(\hat{E});\beta\,=\,\beta_{GH},\,m\,=\,m_{GH}) > 0$,\\
\item[(GH2)] $T_{GH}\,=\,\textup{tr}(J(\hat{E});\beta\,=\,\beta_{GH},\,m\,=\,m_{GH})\,=\,0$,\\
\item[(GH3)] $l(\hat{E};\beta\,=\,\beta_{GH},\,m\,=\,m_{GH}))\,=\,0$,
\end{itemize}
\noindent where $l$ is the first Liapunov number.
\end{prop}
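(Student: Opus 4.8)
The plan is to verify the three conditions (GH1)--(GH3) in the standard way for a Bautin bifurcation, following a normal-form reduction of the planar system~(\ref{eq:3}) restricted near the non-saddle interior equilibrium $\hat E=(\hat x,\hat y)$. First I would note that (GH1)--(GH2) are nothing more than the requirement that $J(\hat E)$ have a pair of purely imaginary eigenvalues $\pm i\omega$ with $\omega^2=\Delta_{GH}>0$; these are inherited directly from the Hopf analysis of the previous subsection, with the implicit relation $\beta_{GH}=1-2\hat x+\alpha\hat y\,h'(\hat y)$ (the $\beta=\beta_H$ condition) now to be solved simultaneously with the second nullcline equation $\tfrac{\alpha\hat x}{\beta+\hat x}h(\hat y)=m$ for the pair $(\beta_{GH},m_{GH})$. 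So the content of the proposition is really condition (GH3): the vanishing of the first Lyapunov coefficient $l_1$ at that point, together with the fact that $(\beta,m)$ can be used as unfolding parameters.

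The key computational step is to derive an explicit formula for $l_1$. I would proceed as in Kuznetsov or Guckenheimer--Holmes: translate $\hat E$ to the origin, put the linear part into the canonical form $\begin{pmatrix}0 & -\omega\\ \omega & 0\end{pmatrix}$ by a linear change of coordinates built from the eigenvectors of $J(\hat E)$ (which we already have in closed form from the Hopf subsection, $v=[1,\,1-\beta-2\hat x]^t$), expand $F_1,F_2$ to third order in the new coordinates, and plug the quadratic and cubic Taylor coefficients into the standard first-Lyapunov-number formula. Because the functional response $g$ and the prey growth are fixed rational/polynomial functions, all the $x$-derivatives are explicit; the only nonelementary quantities entering are $h(\hat y)$, $h'(\hat y)$, $h''(\hat y)$ and $h'''(\hat y)$, and these enter $l_1$ polynomially. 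The upshot is that $l_1=l_1\big(\hat x,\hat y,\alpha,\beta_{GH},m_{GH};h(\hat y),h'(\hat y),h''(\hat y),h'''(\hat y)\big)$ is a single scalar expression; setting it to zero is one equation, and together with (GH1)--(GH2) (two equations, one of which is $\omega^2>0$, an open condition) we generically obtain an isolated solution in the two-parameter plane $(\beta,m)$, which is precisely a codimension-two point on the Hopf curve. For concreteness I would then state that for the Monod, Ivlev and $\tanh$ parameterisations one checks numerically that such a root exists with $l_1$ crossing zero transversally and with the map $(\beta,m)\mapsto(T_{GH},l_1)$ being a local diffeomorphism (the nondegeneracy condition for the Bautin bifurcation), so that the full topological normal form $\dot z = (\eta_1+i)z + \eta_2 z|z|^2 + z|z|^4$ is realised.

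I expect the main obstacle to be twofold. First, the algebra of the normal-form reduction: even though every ingredient is explicit, the first Lyapunov number for a two-dimensional system involves second and third partial derivatives of both components mixed together, and for the rational functional response $\tfrac{\alpha x}{\beta+x}$ the expressions are bulky; care is needed to keep them manageable, and it is cleanest to exploit the nullcline relations ($f^1=f^2=0$) and the trace condition $T_{GH}=0$ to simplify as one goes, rather than at the end. Second, and more essential, is establishing genericity for the \emph{partially specified} model: one cannot prove $l_1=0$ has a solution for \emph{every} admissible $h$ satisfying (A1)--(A5), since $l_1$ depends on $h'''$ which is not constrained by the hypotheses; the honest statement is that \emph{if} $(\beta_{GH},m_{GH})$ solves (GH1)--(GH3) then the Bautin bifurcation occurs, and that this does happen for the three concrete parameterisations. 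I would therefore frame the proof as: (i) show (GH1)--(GH2) reduce to the Hopf conditions, hence define the Hopf curve locally; (ii) derive the $l_1$ formula and observe it is a smooth function along that curve; (iii) invoke the Bautin bifurcation theorem (e.g.\ Kuznetsov, Theorem on the generalized Hopf normal form) once (GH3) and the associated nondegeneracy/transversality conditions hold; (iv) verify these last conditions by direct computation for $h(y)=y/(\delta+y)$, $h(y)=1-e^{-y/\delta}$, $h(y)=\tanh(y/\delta)$.
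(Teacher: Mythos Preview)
Your proposal is correct and follows essentially the same route as the paper: translate $\hat E$ to the origin, Taylor-expand $F_1,F_2$ to third order, insert the coefficients into the standard first-Lyapunov-number formula (the paper uses the Perko-style version directly in terms of the original Jacobian entries $a,b,c,d$ and Taylor coefficients $a_{ij},b_{ij}$, rather than first passing to canonical form as you propose), and observe that the resulting expression---which the paper organises as a quadratic in $m_{GH}$ with coefficients built from $h,h',h'',h'''$---must be verified numerically for the three concrete parameterisations of $h$. Your discussion of the Bautin nondegeneracy and transversality conditions is in fact more careful than the paper's, which simply records the formula for $l_1$ and defers its vanishing to numerical computation without addressing the second Lyapunov coefficient or the regularity of the map $(\beta,m)\mapsto(T,l_1)$.
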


\begin{proof}
See supplementary material SM1.
\end{proof}

Examples of the above type of bifurcation for several parameterisations of $h(y)$ are provided in Section 5.

\subsection{Bogdanov-Takens bifurcation}

Another type of co-dimension two local bifurcation observed in model (\ref{eq:3}) is a Bogdanov- Takens (BT) bifurcation. In a two dimensional parametric plane, this bifurcation occurs at a point where a Hopf bifurcation curve meets a saddle-node bifurcation curve tangentially. In the previous section, we chose $\beta$ as the bifurcation parameter for the Hopf bifurcation and $m$ for the saddle-node bifurcation. Therefore we will consider $\beta$ and $m$ as bifurcation parameters for the BT bifurcation and suppose that model (\ref{eq:3}) exhibits a BT bifurcation at $\bar{E}\,=\,(\bar{x},\bar{y})$ and the parametric thresholds are denoted by $(\beta,m)\,=\,(\beta_{BT},m_{BT})$. From the general bifurcation theory \cite{perko2013differential} it is known that $\bar{E}$  satisfies the equations of nullclines (\ref{eq:nullcline}) and also the  Jacobian matrix is similar to $\begin{bmatrix}
0 & 1\\
0 & 0
\end{bmatrix}$ at $\bar{E}$ for the parameter threshold $(\beta\,=\,\beta_{BT},m\,=\,m_{BT})$. The following proposition provides the conditions for model (\ref{eq:3}) to  undergo a Bogdanov-Takens bifurcation.

\begin{prop}
If we choose $\beta$ and $m$ as bifurcation parameters, then system (\ref{eq:3}) undergoes a Bogdanov-Takens bifurcation around the interior equilibrium point $\bar{E}$ whenever the following conditions hold
\begin{itemize}
\item[(BT1)] $\textup{tr}(J(\bar{E};\beta\,=\,\beta_{BT},\,m\,=\,m_{BT}))\,=\,0$ ,\\
\item[(BT2)] $\textup{det}(J(\bar{E};\beta\,=\,\beta_{BT},\,m\,=\,m_{BT}))\,=\,0$.\\
\end{itemize}
\end{prop}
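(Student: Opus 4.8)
The plan is to bring system (\ref{eq:3}) into the topological normal form of the Bogdanov--Takens singularity in a neighbourhood of $\bar E$ and to verify the standard non-degeneracy and transversality conditions (see \cite{perko2013differential}). First I would note that (BT1) and (BT2) together force both eigenvalues of $J(\bar E)$ to vanish; since the upper-right entry of $J(\bar E)$ equals $-\bar x/(\beta+\bar x)\neq 0$, the matrix is non-zero, so the eigenvalue $0$ has geometric multiplicity one and $J(\bar E)$ is similar to $\begin{bmatrix}0&1\\0&0\end{bmatrix}$. As in the saddle-node analysis a right null vector is $v_0=(1,\,1-\beta-2\bar x)^{t}$; I would complete it to a Jordan basis $\{v_0,v_1\}$ with $J(\bar E)v_1=v_0$, and take the dual basis $\{w_0,w_1\}$ of $J(\bar E)^{t}$ (so that $J(\bar E)^{t}w_1=0$, $J(\bar E)^{t}w_0=w_1$ and $\langle v_i,w_j\rangle=\delta_{ij}$).

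Next I would shift $\bar E$ to the origin, Taylor-expand $F=(F_1,F_2)^{t}$ to second order, and introduce coordinates $(\eta_1,\eta_2)$ via $u-\bar E=\eta_1 v_0+\eta_2 v_1$. In these coordinates the linear part is the Jordan block and the system reads $\dot\eta_1=\eta_2+Q_1(\eta)$, $\dot\eta_2=Q_2(\eta)$, where $Q_1,Q_2$ are quadratic forms with coefficients $\tfrac12\langle w_j,D^{2}F(\bar E)(v_k,v_l)\rangle$. A standard near-identity change of variables together with a reparametrisation of time then reduces the system to
\begin{equation*}
\dot\xi_1=\xi_2,\qquad \dot\xi_2=a\,\xi_1^{2}+b\,\xi_1\xi_2+O(\|\xi\|^{3}),
\end{equation*}
with $a$ and $b$ explicit combinations of $F_{1,xx},F_{1,xy},F_{2,xx},F_{2,xy},F_{2,yy}$ at $\bar E$, hence of $\alpha,\beta,m,\bar x,\bar y$ and of $h(\bar y),h'(\bar y),h''(\bar y)$. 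The singularity is a non-degenerate Bogdanov--Takens point provided $a\neq0$ and $b\neq0$. Here $h''(\bar y)<0$ enters $F_{2,yy}$ with a fixed sign and, together with positivity of $\bar x,\bar y$ and of $\alpha h(\bar y)-m$, this should suffice to conclude $a,b\neq0$ generically; as was done for the Hopf threshold, I would additionally check the closed-form expressions directly for the Monod, Ivlev and $\tanh$ parameterisations.

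Finally I would verify transversality of the two-parameter unfolding. Eliminating $\bar y=(1-\bar x)(\bar x+\beta)$ from $F_1=0$ and using $F_2=0$ reduces the equilibrium conditions near the BT point to a single smooth relation in $(\bar x,\beta,m)$; combining it with (BT1)--(BT2) lets one express the unfolding parameters of the normal form --- equivalently $\big(\textup{tr}\,J(\bar E),\ \textup{det}\,J(\bar E)\big)$ along the branch of equilibria --- as smooth functions of $(\beta,m)$. I would then compute the Jacobian of this map and show it is non-singular at $(\beta_{BT},m_{BT})$, so that $(\beta,m)$ sweeps out a full neighbourhood of the origin in the $(\textup{tr},\textup{det})$-plane; together with $a,b\neq0$ this identifies the local dynamics with the universal BT unfolding (a saddle-node curve, a Hopf curve tangent to it, and a homoclinic curve emanating from $\bar E$). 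I expect the main obstacle to be precisely the bookkeeping of these last two steps: obtaining manageable closed forms for $a$, $b$ and for the transversality determinant in terms of the partially specified $h$, and keeping track of the sign contributed by $h''(\bar y)$ so that non-degeneracy is not lost --- for a fully general $h$ this is most honestly stated as an additional genericity requirement to be verified for each concrete parameterisation.
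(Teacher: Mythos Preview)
Your proposal is correct in outline and follows the standard Kuznetsov-style route (Jordan basis, projections onto $w_0,w_1$, near-identity change to extract the invariants $a,b$). The paper takes an equivalent but more pedestrian path: it perturbs $(\beta,m)\mapsto(\beta_{BT}+\lambda_1,m_{BT}+\lambda_2)$, shifts $\bar E$ to the origin, and then applies an explicit chain of affine and $C^\infty$ coordinate changes (together with the Malgrange preparation theorem) to reach the normal form $\dot y_1=y_2$, $\dot y_2=\mu_1+\mu_2 y_2+y_1^2+(w_{11}/\sqrt{w_{20}})\,y_1y_2+\cdots$, with the non-degeneracy coefficients appearing as $k_{20}$ and $l_{20}+k_{11}$ (corresponding to your $a$ and $b$).

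The one substantive point where the paper goes further than your sketch is the first non-degeneracy coefficient: rather than treating $a\neq 0$ as a genericity hypothesis to be checked for each concrete $h$, the paper shows that $k_{20}>0$ for \emph{every} $h$ satisfying (A1)--(A5), using only $1-\beta_{BT}-2\bar x<0$ (a consequence of (BT1)) and $h''<0$. You should be able to reproduce this in your framework, since your $a$ is the same quantity up to a positive factor. For the second coefficient the paper agrees with you: it computes $l_{20}+k_{11}$ explicitly but cannot sign it in general, and explicitly records that its vanishing yields the codimension-3 (double-equilibrium) BT point. Your proposed transversality check via the Jacobian of $(\mathrm{tr},\det)$ in $(\beta,m)$ is a reasonable addition; the paper sets up $\mu_1(\lambda_1,\lambda_2),\mu_2(\lambda_1,\lambda_2)$ but does not carry out that final determinant computation.
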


\begin{proof}
See supplementary material SM 2 for detail.
\end{proof}

We found that model (\ref{eq:3}) may undergo a Bogdanov-Takens bifurcation of either co-dimension 2 or co-dimension 3. The latter requires an extra condition of degeneracy given in the supplementary material. Note that a co-dimension 3 Bogdanov-Takens bifurcation, if it exists, should be of the type involving a double equilibrium point \cite{dumortier1987generic}. Indeed, the other type of this bifurcation - known as the cusp- would require a triple equilibrium point which is impossible for this model as shown in Section 3. For the same reason, a co-dimension 4 Bogdanov-Takens  bifurcation is impossible in this system. Examples of Bogdanov-Takens bifurcation of co-dimension 2 and 3 for particular parameterisations of $h(y)$ are provided in the next section.


\section{Parametric diagrams and phase portraits}
In this section, we construct global parametric diagrams for the considered model for  three different mathematical formulations of the Allee effect $h(y)$ given by the Monod, Ivlev and trigonometric tangent  functions. Note that all of them satisfy assumptions (A1)-(A5). Examples of all three curves constructed for $\delta=0.2$ are shown in Fig. ~\ref{fig:Allee}. Note that for the plotted functions the initial slopes and their asymptotic values for large $y$ are the same.
\begin{figure}[ht]
\centerline{
\includegraphics[scale=0.5]{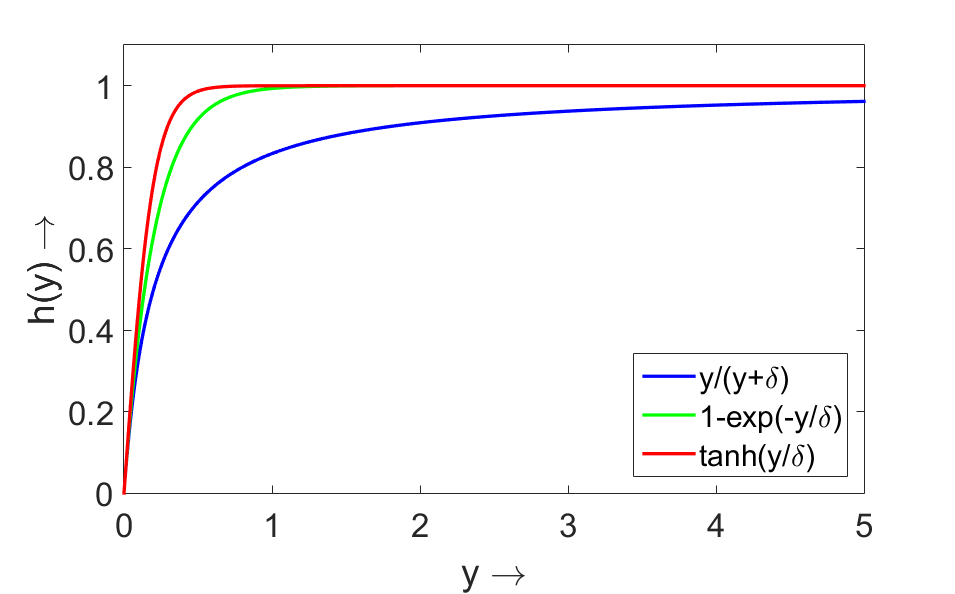}} \caption{Graphs of parameterisations of the Allee effect in the predator given by the Monod ($y/(y+\delta)$), the Ivlev ($1-\exp(-y/ \delta)$) and the trigonometric ($\tanh(y/ \delta)$) functions; $\delta=0.2$.} \label{fig:Allee}
\end{figure}

The model contains 4 parameters, so it is convenient to present our results in a 3 dimensional parametric space and then  explore the alteration to the portrait by varying a fourth parameter. We construct portraits in the $(\alpha,\delta,m)$ space with a further variation of $\beta$. For all considered formulations of $h(y)$, the parameter $\delta$ can be interpreted as the intensity of the Allee effect. In particular, in the case where $\delta$ vanishes the system becomes the classical Rosenzweig-MacArthur predator-prey model.

\begin{figure}
\centering
\mbox{\subfigure[]{\includegraphics[width=6.3cm,height=5.5cm]{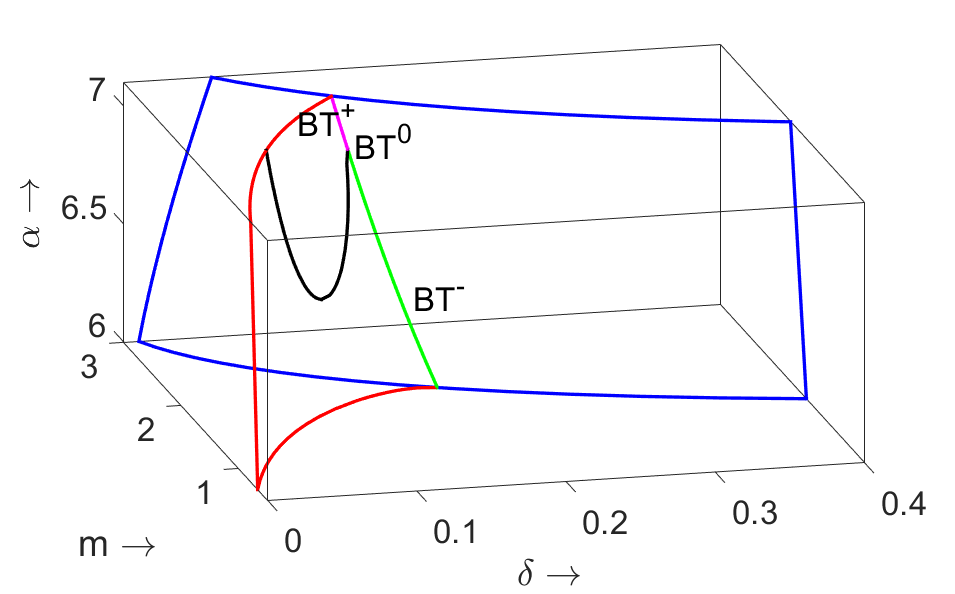}}
\subfigure[]{\includegraphics[width=6.3cm,height=5.5cm]{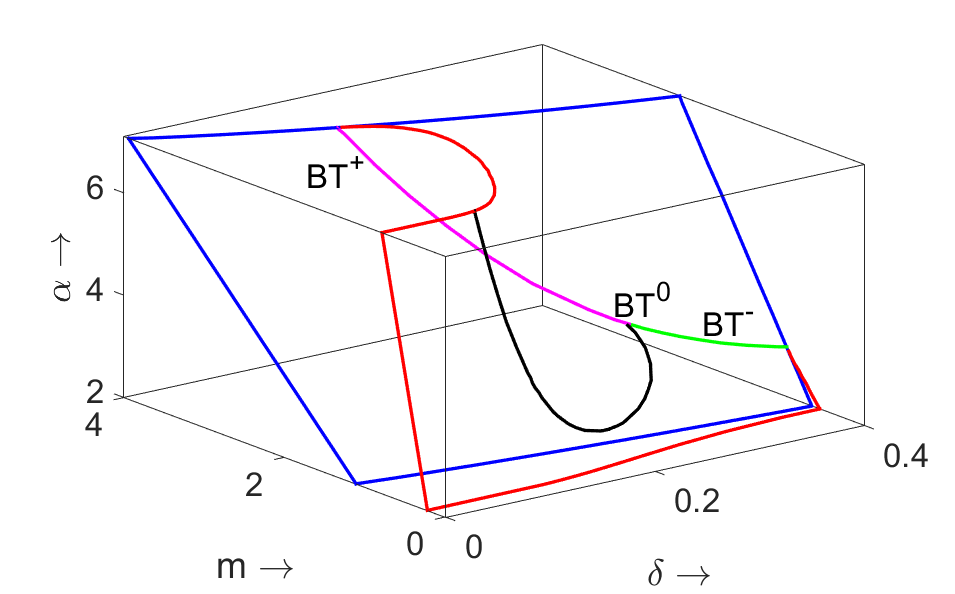}}}
\mbox{\subfigure[]{\includegraphics[width=7cm,height=6cm]{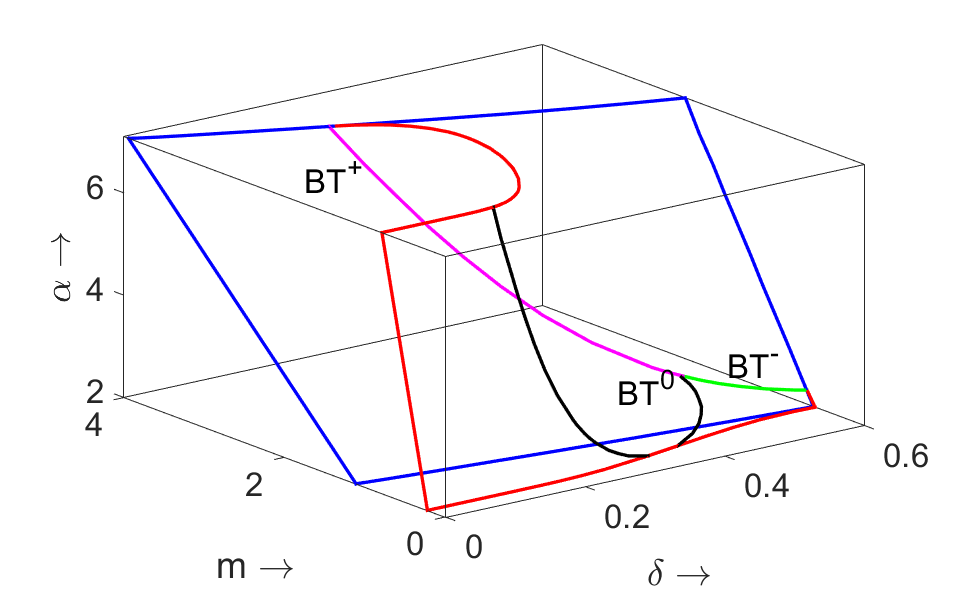}}}
\caption{Three dimensional diagram $(\alpha,\delta,m)$ of model (\ref{eq:3}) for the Allee effect parameterised by: (a) the Monod response $h(y)=y/(\delta+y)$; (b) the Ivlev response $h(y)=1-e^{-\frac{y}{\delta}}$; (c) the trigonometric tangent function $h(y)=\tanh(\frac{y}{\delta})$. In each diagram, $\beta=0.8$. The explanation of the surfaces and curves is in the text.}
\label{Fig_3D_diagrams}
\end{figure}

Examples of parametric portraits for the three functional forms of $h(y)$ are given in Fig. \ref{Fig_3D_diagrams}, in each case  $\beta=0.8$ is kept fixed. We show the skeletons of the parametric portraits given by local bifurcations: to avoid overloading the diagram, we do not include the non-local bifurcations which are shown in the corresponding cross sections in next figures. The saddle-node bifurcation surface is denoted by the blue curves. These curves show intersections of the saddle-node bifurcation surface with boundaries of the parameteric diagram. The intersection of the Hopf bifurcation surface with the boundaries is denoted by red curves. The saddle-node and Hopf surfaces intersect along the Bogdanov - Takens bifurcation curve which consists of green and magenta coloured parts: the magenta colour corresponds to a Bogdanov- Takens bifurcation of codimension 2 with a positive product of the state variables in the normal form (see SM 2 for detail) and is denoted as $BT^+$; the green part of the curve gives Bogdanov-Takens bifurcation of codimension 2 with this product having negative sign and is denoted as $BT^-$. The black curve represents the location of Generalised Hopf points on the Hopf bifurcation surface. This curve emerges from the point of Bogdanov-Takens bifurcation of codimension 3 (denoted as $BT^0$). From comparison of the diagrams in Fig. \ref{Fig_3D_diagrams}, we conclude that the global bifurcation structure in the parametric space remains the same topologically for all three formulations of $h(y)$.


To better understand the parametric structure and feasible phase portraits in the model, we explored two-dimensional cross sectional diagrams for a constant $\alpha$ and $\beta$. In the main text, we present the diagrams for the Monod formulation of $h(y)$. The diagrams for the other functional forms of $h(y)$ are shown in the supplementary material (SM 3). An example of a $(\delta,m)$ diagram constructed for $\alpha$ above the $BT^0$ point is shown in Fig. ~\ref{Fig_bifdiagSaturating}a; the other two $(\delta,m)$ diagrams in the same figure are constructed for $\alpha$ below the $BT^0$ point. Fig.\ref{Fig_bifdiagSaturating}b describes the situation where the  $(\delta,m)$ plane does not intersect the GH bifurcation curve, the opposite case is shown in  Fig.\ref{Fig_bifdiagSaturating}c. The corresponding phase portraits of the model are given in Fig. \ref{phaseportraitsHolling}.

\begin{figure}
\centering
\mbox{\subfigure[]{\includegraphics[width=6.3cm,height=5.5cm]{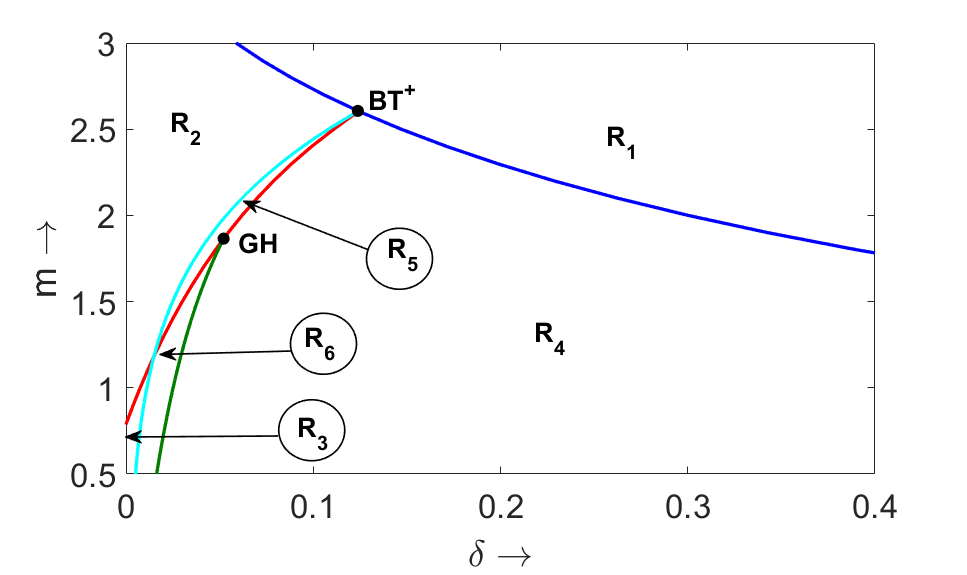}}
\subfigure[]{\includegraphics[width=6.3cm,height=5.5cm]{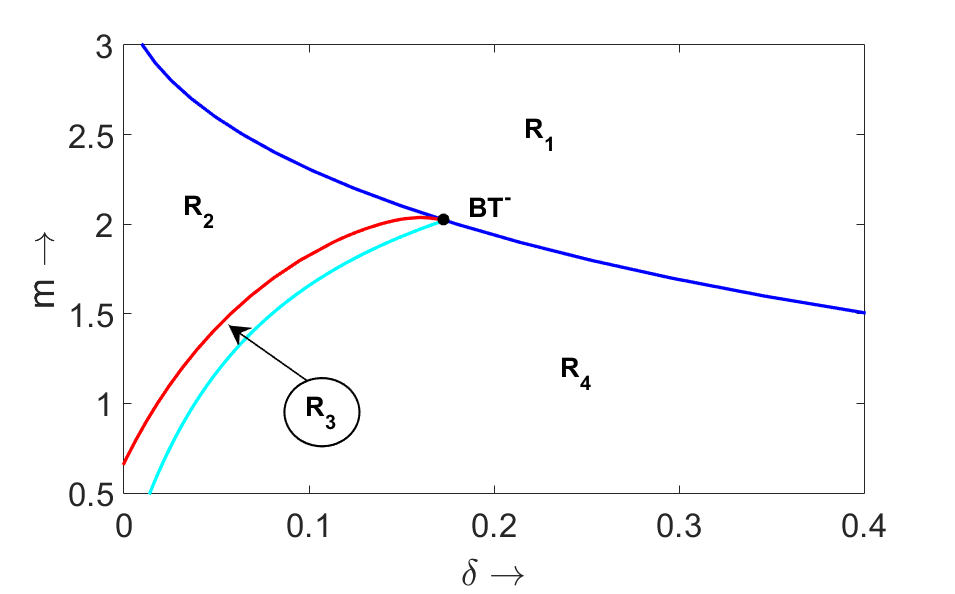}}}
\mbox{\subfigure[]{\includegraphics[width=6.3cm,height=5.5cm]{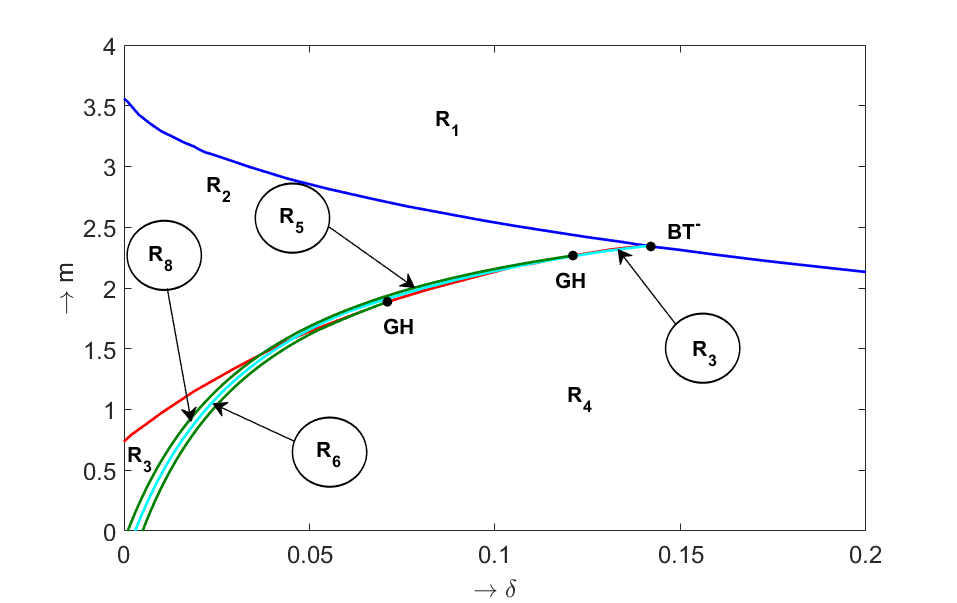}}
\subfigure[]{\includegraphics[width=6.3cm,height=5.5cm]{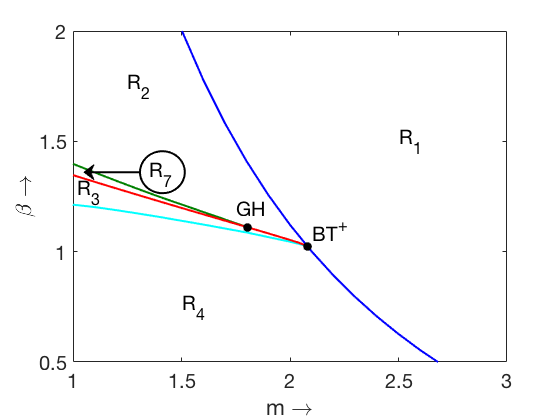}}}
\caption{(a)-(c) Examples of bifurcation diagrams in $\delta$-$m$ parametric
plane for the Monod parameterisation of the Allee effect in predator
plotted for $\alpha$ above and below the $BT^0$ point shown in
Fig.\ref{Fig_3D_diagrams} (a). The parameter values are: (a)
$\alpha\,=\,7.1$; (b) $\alpha\,=\,6.0$ and (c)
$\alpha\,=\,6.6$; in all cases $\beta\,=\,0.8$. 
(d) Two dimensional diagram $(\beta, m)$ constructed for $ \alpha=7.1$ and $\delta=0.2$. In each diagram, the dark blue curve is a saddle-node bifurcation curve. The red curve is
a Hopf bifurcation curve. The green curve is the curve of a fold
bifurcation of limit cycles. The cyan curve describes a homoclinic
bifurcation. The meaning of the regions $R_i$ is explained in
Fig.\ref{phaseportraitsHolling} and in the main text.}
\label{Fig_bifdiagSaturating}
\end{figure}

From Fig. \ref{Fig_bifdiagSaturating} (a)-(c) one can see that for large values of $m$ and $\delta$ (region $R_1$) there are no coexistence equilibria in the system: the only (global) attractor is the state $E_{1}$, where only the prey survives, whereas the predator goes to extinction. The corresponding phase portrait is shown in Fig. \ref{phaseportraitsHolling}(a).

Reduction in the strength of the Allee effect (small $\delta$ and high rates of mortality $m$)	results in the emergence of a pair of equilibrium points: a saddle $E_{1*}$ and a node   $E_{2*}$ (region $R_2$). The non-saddle point is only locally stable: its basin of attraction is limited by that of the axial equilibrium point $E_{1}$, which is shown in Fig. \ref{phaseportraitsHolling}(b). For large values of $m$ (on the right hand side of a BT point), a decrease in $\delta$ will result in a saddle-node bifurcation where the non-saddle point $E_{1*}$  will be unstable (region $R_4$). In this case, the global attractor will be the prey only state $E_{1}$ (the phase portrait is shown in Fig. \ref{phaseportraitsHolling}(d)). The loss of stability of $E_{1*}$  when crossing the Hopf bifurcation curve around the BT point depends on the sign of the BT point. For $BT^+$ (Fig. \ref{Fig_bifdiagSaturating}(a)) transition from $R_2$ to $R_4$ occurs via region $R_5$ by crossing the homoclinic loop bifurcation curve. A locally stable interior equilibrium $E_{1*}$ bcomes surrounded by an unstable cycle which forms its basin of attraction (Fig. \ref{phaseportraitsHolling}(e)). All trajectories starting outside this cycle will be attracted to the prey only state $E_1$. The transition from region $R_5$ to region $R_2$ occurs via a homoclinic loop bifurcation. For $BT^-$ (Fig. \ref{Fig_bifdiagSaturating}b), the transition from $R_2$ to $R_4$ occurs via region $R_3$ by crossing a supercritical Hopf bifurcation curve. In region $R_3$, an unstable internal equilibrium $E_{1*}$ is surrounded by a stable limit cycle (Fig. \ref{phaseportraitsHolling}(c)). One can see that in Fig. \ref{Fig_bifdiagSaturating}(a), for smaller $m$, a decrease in $\delta$ from region $R_4$ results in a fold bifurcation of limit cycles. In region $R_6$ we have two limit cycles: the inner cycle is stable, the outer cycle is unstable (Fig.\ref{phaseportraitsHolling}(f)). The outer cycle forms the boundary of the basin of attraction for the state $E_1$.

The diagram in Fig.\ref{Fig_bifdiagSaturating}c is more complicated
as compared to Fig.\ref{Fig_bifdiagSaturating}a,b. In particular, a
new region $R_8$ emerges, where three limit cycles can coexist: the
inner limit cycle is stable, the middle cycle is unstable and the outer
one is stable. The corresponding portrait is shown
in Fig.\ref{phaseportraitsHolling}(h).


For the Ivlev and trigonometric formulations of $h(y)$, the bifurcation diagrams in the $\delta$-$m$ plane constructed for $\alpha$ above and below the $BT^0$ point are topologically equivalent (see supplementary material SM3). However, for fixed $\beta$ and $\alpha$ the location of the bifurcation curves as well as the types of bifurcation (e.g., $BT^+$ versus $BT^-$ type of bifurcation) in the $\delta$-$m$ plane may be substantially different, especially when comparing the Monod parametrisation with the other two functional forms. This indicates sensitivity of the model to the functional form of $h(y)$. We explore the structural sensitivity in more detail in the next section.

Consider now variation of the fourth model parameter $\beta$. A decrease in $\beta$ results in a shift of the saddle-node and Hopf bifurcation surfaces in the $(\alpha,\delta,m)$ space closer to the $\alpha-\delta$ plane. The length of the curve of Bogdanov-Takens points (the intersection between the saddle-node and Hopf bifurcation) will be shortened and it moves upwards on the saddle-node bifurcation surface. The codimension 3 Bogdanov-Takens bifurcation is still  observed. The above properties hold true for all three functional forms of $h(y)$ considered.


\begin{figure}
\centering
\mbox{\subfigure[]{\includegraphics[width=0.33\textwidth]{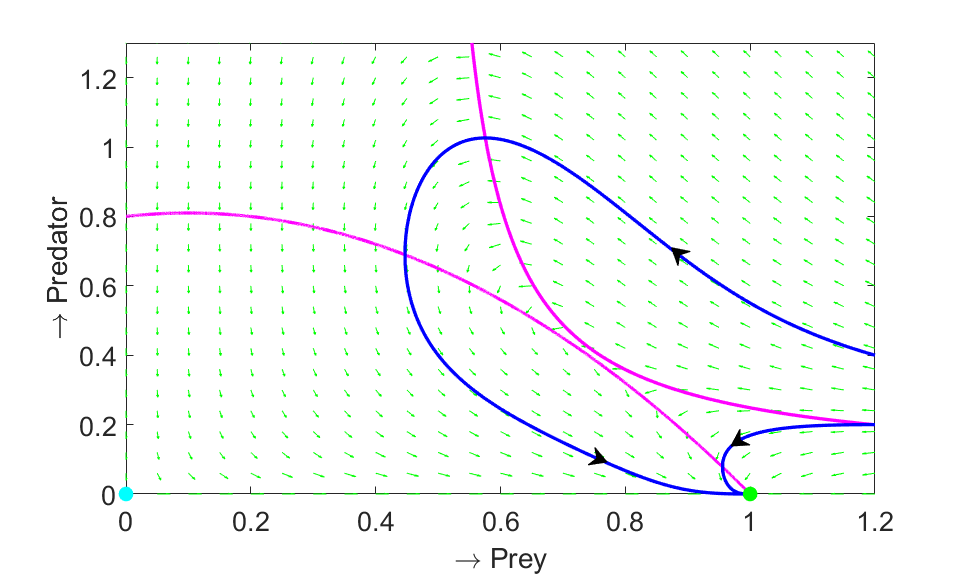}}
\subfigure[]{\includegraphics[width=0.33\textwidth]{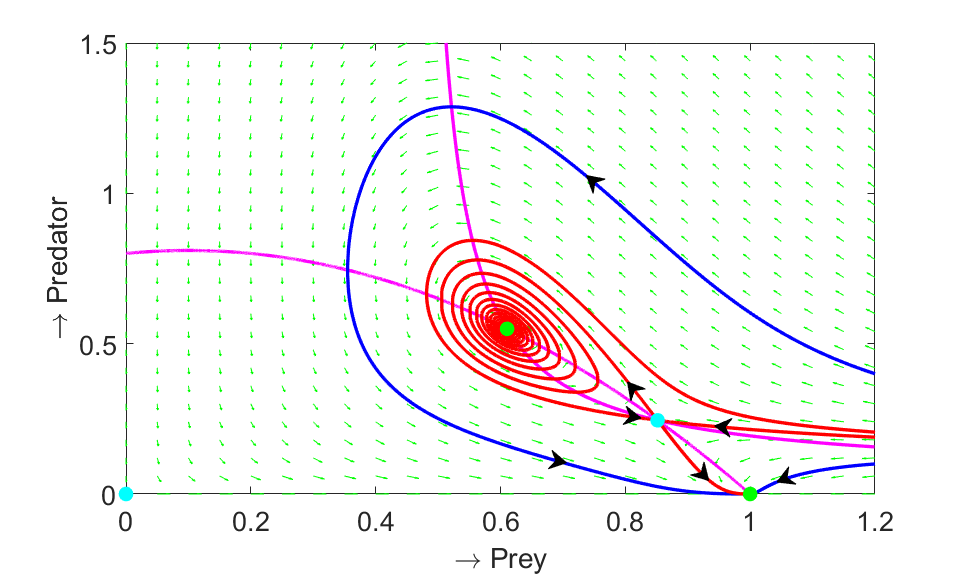}}}
\mbox{\subfigure[]{\includegraphics[width=0.33\textwidth]{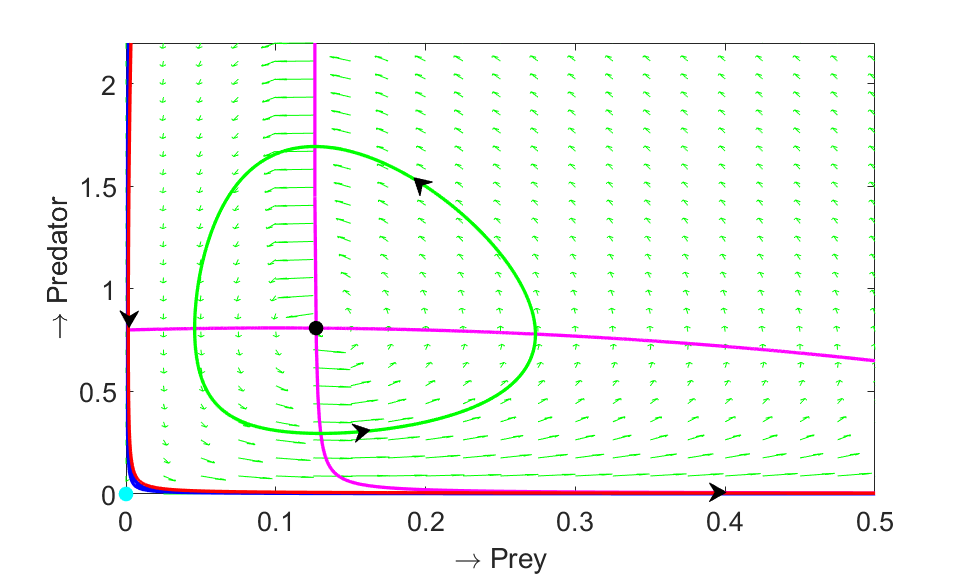}}
\subfigure[]{\includegraphics[width=0.33\textwidth]{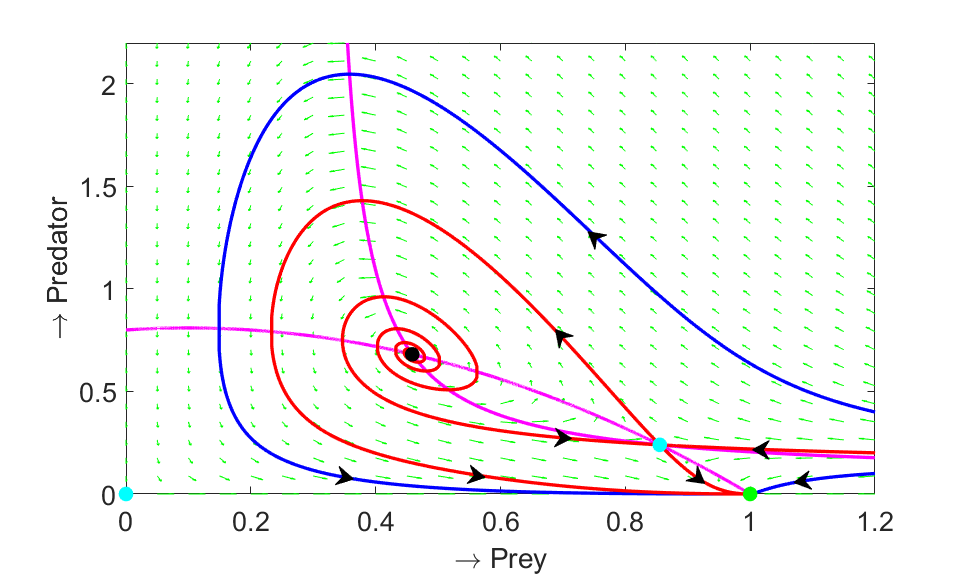}}}
\mbox{\subfigure[]{\includegraphics[width=0.33\textwidth]{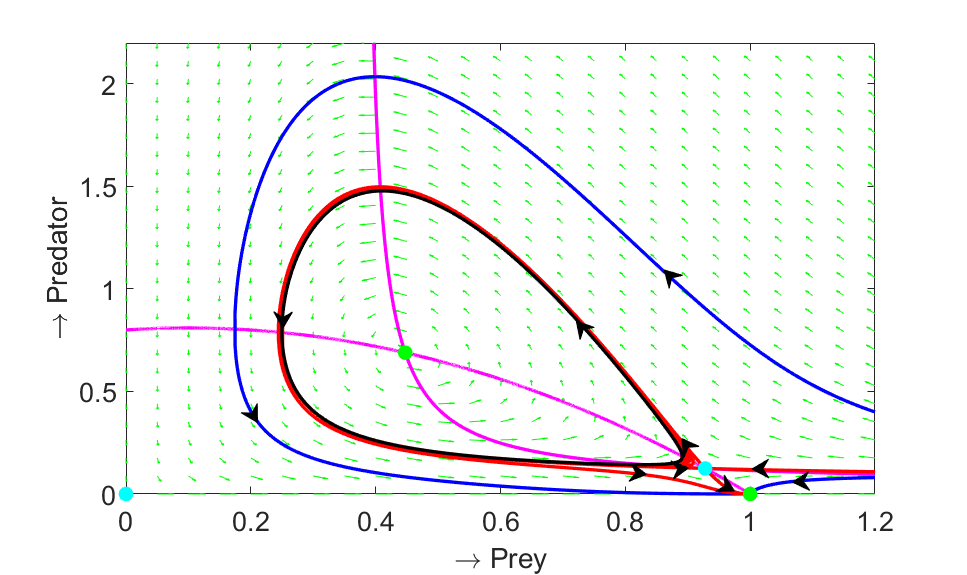}}
\subfigure[]{\includegraphics[width=0.33\textwidth]{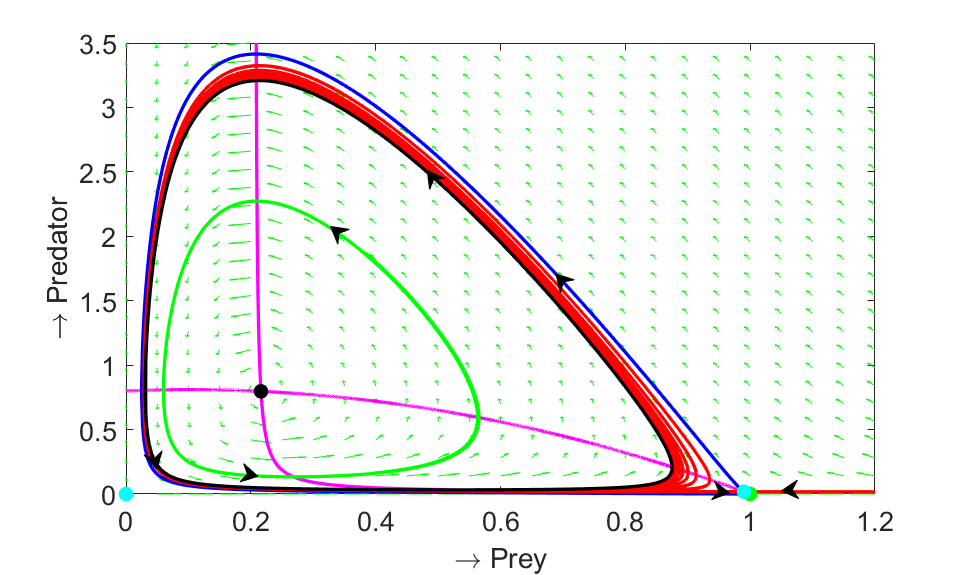}}}
\mbox{\subfigure[]{\includegraphics[width=0.33\textwidth]{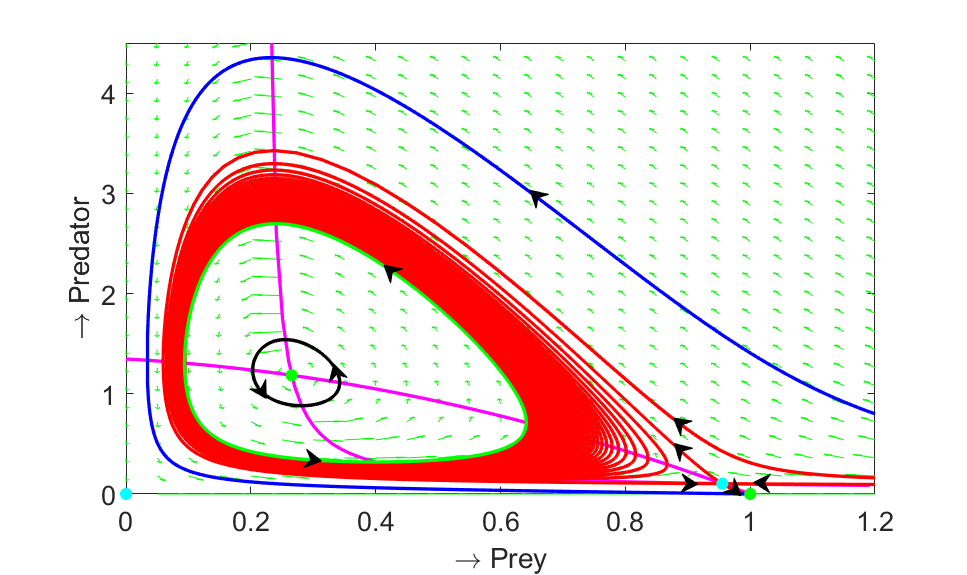}}
\subfigure[]{\includegraphics[width=0.33\textwidth]{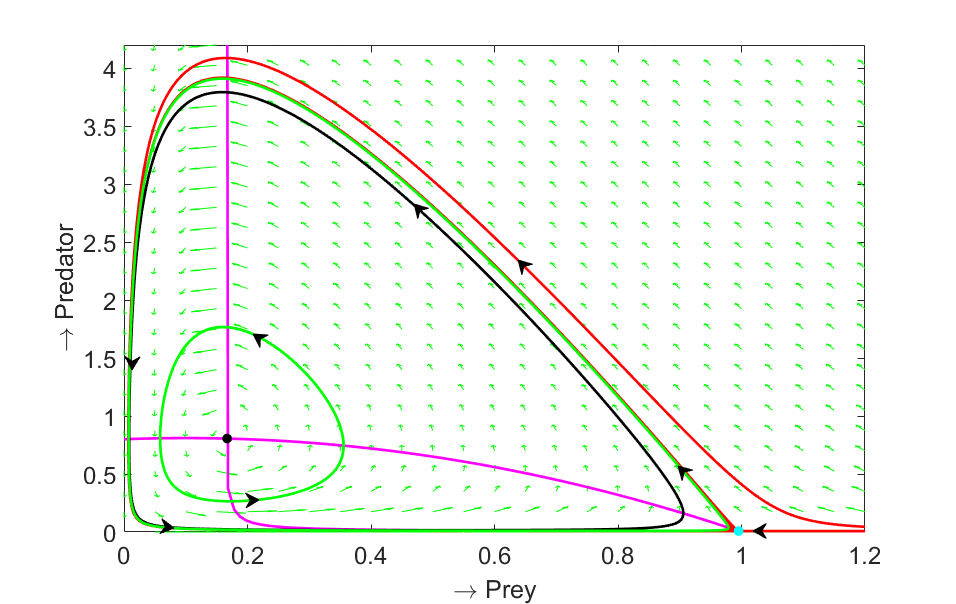}}}
\caption{Phase portraits for the model with the Monod form of
$h(y)$. (a)-(f) are constructed for  $\alpha\,=\,7.1$,
$\beta\,=\,0.8$; the other parameters are (a) $\delta\,=\,0.12$,
$m\,=\,2.66$, region ($R_1$); (b) $\delta\,=\,0.1$, $m\,=\,2.6$,
region ($R_2$); (c) $\delta\,=\,0.01$, $m\,=\,0.96$, region ($R_3$);
(d) $\delta\,=\,0.2$, $m\,=\,2$, region ($R_4$); (e)
$\delta\,=\,0.085$, $m\,=\,2.67$, region ($R_5$); (f)
$\delta\,=\,0.03$, $m\,=\,1.456$, region ($R_6$).
(g) is plotted for $\alpha=7.1$; $\delta=0.2$; $\beta=1.347$; $m=1$,
region ($R_7$); (h) is plotted for $\alpha=6.6$, $\beta=0.8$,
$\delta=0.02$, $m=1.09$,  region ($R_8$).}
\label{phaseportraitsHolling}
\end{figure}


Finally, we consider the case where $\beta$ gradually decreases and the other parameters are kept fixed. This corresponds to the ecologically important scenario in which the environment undergoes gradual eutrophication: an increase in the carrying capacity $K$ in the original model (\ref{eq:1}) corresponds to a proportional decrease in $\beta$ in the dimensionless model. An example of a diagram in the $\beta-m$ plane is constructed for the Monod parametrisation of $h(y)$ (Fig. \ref{Fig_bifdiagSaturating} d). The parameter regions in the diagram have the same meanings as in Fig.\ref{phaseportraitsHolling}. A gradual decrease of $\beta$ results in destabilisation of the coexistence equilibrium and a further collapse of the population of predator (transition from region $R_2$ to region $R_4$). Thus in a eutrophic environment, the only stable equilibrium in the model is $E_{1}$ being a predator free equilibrium. Note that in this diagram we have a new region denoted as $R_7$ in which a locally stable equilibrium $E_{1*}$ is surrounded by two limit cycles: the inner cycle is unstable whereas the outer one is stable. The corresponding phase portrait is shown in Fig.\ref{phaseportraitsHolling}(g). Similar bifurcation behavior is observed for the other two parameterisations of $h(y)$.

\section{Structural sensitivity of the model}

An important part of our investigation is exploring the dependence of model behaviour on the choice of parametrisation of the Allee effect given by $h(y)$. In the previous section, we show that the skeleton of the bifurcation diagram is topologically robust to the mathematical shape of $h(y)$ (Fig.\ref{Fig_3D_diagrams}) when we use three different parameterisations given by the Monod, Ivlev and hyperbolic tangent functions. The relative positions of bifurcation surfaces and possible dynamical regimes remain the same. On the other hand, we also find that for a fixed set of parameters the parametric diagrams can differ considerably, even for close functions $h(y)$. This property is known as structural sensitivity of biological models.

\begin{figure}[ht]
\centerline{
\includegraphics[scale=0.4]{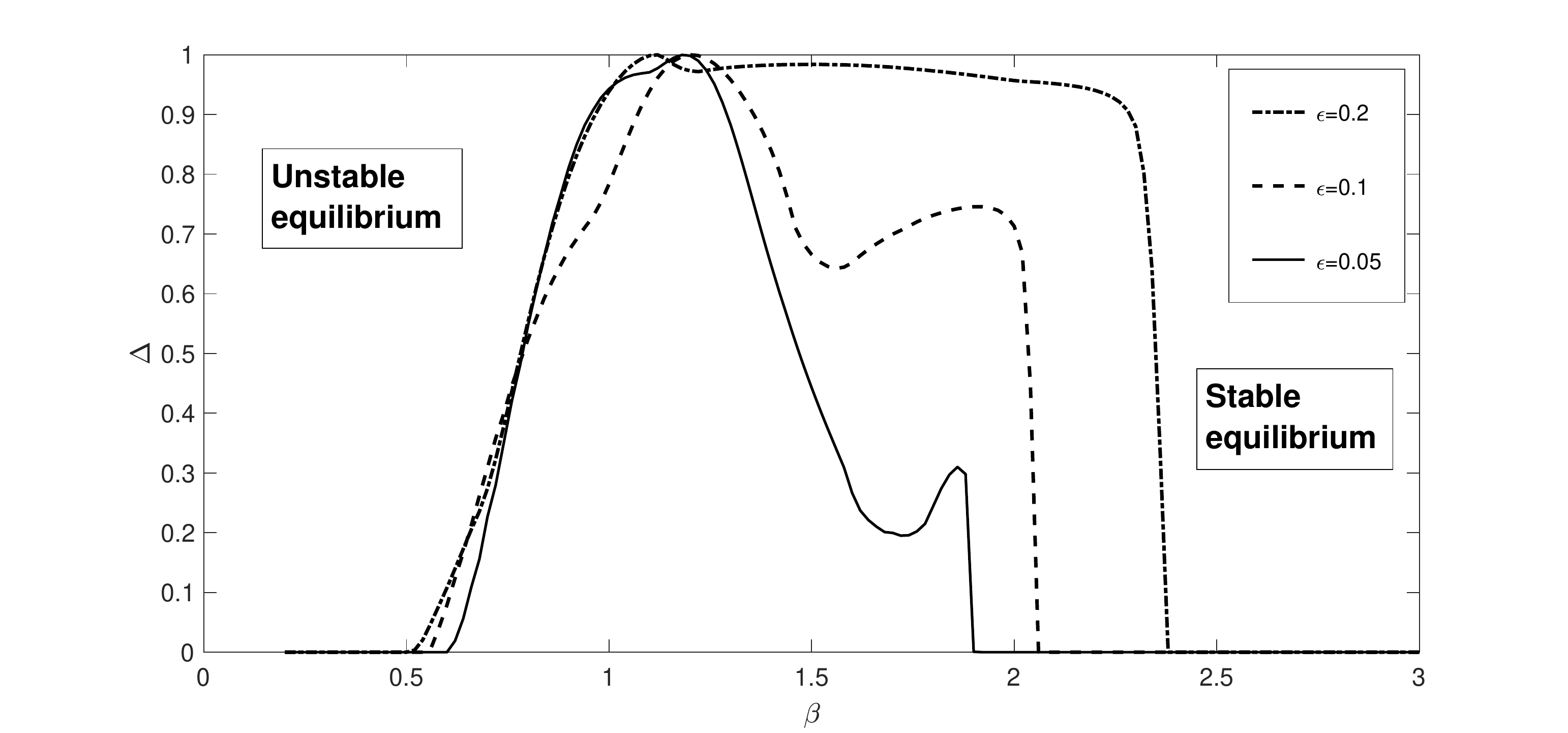}} \caption{The degree of structural sensitivity $\Delta$ in model (\ref{eq:3}) to variation of the parameterisation of the Allee effect in the predator $h(y)$, shown for different $\beta$. The $\epsilon$-neighbourhood of $h(y)$ is defined using the relative difference between the base function and its perturbations. Destabilisation of the equilibrium for the base function (Monod parameterisation) occurs at $\beta=0.975$  For further detail see the text and also \cite{adamson2013can,adamson2014bifurcation}. The other parameters are $\alpha= 7.1$, $\delta=0.12$, $m=1.8$ and $D=10$.} \label{fig:Structural}
\end{figure}

Structural sensitivity of models is an important issue in ecological modelling with a large number of insightful examples provided \cite{adamson2013can,adamson2014bifurcation,adamson2014defining,aldebert2019three,flora2011structural,wood1999super}. The biological rationale behind this idea is that any realistic ecological dependence $h(y)$ can be some combination of the three parameterisations considered above, or, more generally, any other mathematical functions. Here we explore the sensitivity of model predictions to a small variation of $h(y)$ by considering the entire set of functions which satisfy assumptions (A1)-(A5), and we use the methodology from \cite{adamson2013can,adamson2014bifurcation}. Here we investigate the sensitivity of the stability of the interior non-saddle stationary state to small but finite perturbation of $h(y)$ starting from one of the functional forms considered above. Mathematically, we consider the following situation.

Let the base function $h(y)$ be of Monod type. We consider small deviations $h_1(y)$ from the base function $h(y)$ such that $h(y)(1-\epsilon) \le h_1(y)\le h(y)(1+\epsilon)$ and the second derivative of $h_1(y)$ is negative and bounded, $-D< h''_1(y) \le0$, $D>0$ (note that the base function also satisfies this condition). For simplicity we slightly relax our assumption (A5) and allow $h_1(y) \rightarrow 1+\epsilon$, $y \rightarrow \infty$, although this fact is not crucial for the general outcome and conclusions. We conduct the sensitivity analysis in the same way as when investigating the role of the functional response of the predator in stability dynamics in the Rosenzweig-MacArthur model \cite{adamson2013can}. To quantify the structural sensitivity to the choice of $h_1(y)$ we use the degree of structural sensitivity $\Delta$ introduced in \cite{adamson2013can}. $\Delta$ represents the probability that for two randomly chosen functions of $h_1(y)$, the coexistence stationary state will have the same stability properties \cite{adamson2013can}, to be stable or unstable. The maximal degree of sensitivity possible in the system is equal to one, signifying maximal uncertainty in the system


Fig.\ref{fig:Structural} shows the degree of sensitivity $\Delta$  as the parameter $\beta$ is  varied. From the figure one can conclude that a decrease in $\beta$ eventually results in destabilization of the system for any parametrisation of $h_1(y)$: we gradually move from the region of stability to the region of instability shown in the figure. However, depending on the particular choice of $h_1(y)$, this destabilisation may occur within a wide range of $\beta$, thus the system has a plasticity to resist destabilisation caused, for example, by eutrophication (a high value of $K$ in the original system signifies a low $\beta$ in the dimensionless model). Another important observation is that the system exhibits large uncertainty even if the deviation $\epsilon$ from the base function is small ($<5$\%). Considering the Ivlev and the hyperbolic tangent as the base function provides similar results.

\section{Discussion}

The role of the Allee effect in population dynamics has been largely addressed in both empirical and theoretical literature. Surprisingly enough, there has been almost no thorough mathematical investigation into the bifurcation structure of any predator-prey model with an Allee effect in the predator, in particular, this concerns the realistic scenario, where the Allee effect is included in the numerical response of the predator without affecting its functional response \cite{berec2007multiple,courchamp2008allee,dennis1989allee}. This is in a striking contrast to the situation with single species population models or classical predator-prey models with an Allee effect in the prey growth, which have been discussed in all detail and are now included in standard student textbooks in mathematical biology \cite{kot2001elements}. The current study is intended to partially bridge the existing gap. Importantly, we our results are not based on a particular mathematical formulation of the function describing the Allee effect, rather we consider various parameterisations of $h(y)$ which satisfy only few qualitative constraints (A1)-(A5). We have also addressed (for the first time) the issue regarding the sensitivity of the model with respect to parameterisation of the Allee effect (known as the structural sensitivity).

In this section, we will mostly focus on ecological implications of the mathematical results obtained in the previous sections. We should stress, however, that it is crucial to define the way of how we should formally assess the consequences of the Allee effect on the population dynamics. Indeed, this question is far to be a trivial one since distinct paradigms exist in the literature \cite{alves2017hunting,berec2007multiple,courchamp2008allee,dennis1989allee}. In fact, evaluation of the role of the Allee effect in population success should largely depend on the choice of the initial density, and this fact is still somehow disregarded in the literature. Indeed, consider two populations, where one possesses a self-accelerating per capita reproduction rate, and the other one which is characterised by a constant per capita growth rate: for simplicity we assume the growth rates of both populations at the some low density to be the same. Then an increase in the population density would result in an increase in the per capita growth rate of the species with an Allee effect, and this will clearly indicate the benefits of possessing an Allee effect. For example, hunting cooperation is considered to be beneficial for predators up to certain level of population density \cite{alves2017hunting}. On the other hand, considering higher population densities as the starting point for comparison (e.g. population densities where the Allee effect is not pronounced) will show a different outcome. For the same initial per capita growth rates, the species without an Allee effect will be more advantageous since a sudden drop of the population size would not largely affect its per capita growth whereas the population with an Allee effect may exhibit a significant decline in reproduction rate with a threat of extinction. For example, low fertilization efficiency, a lack of mating partners, and sperm limitation are usually considered as negative and undesirable features for population persistence \cite{berec2007multiple,courchamp2008allee,dennis1989allee}. 

Arguably, for many species, their reproduction rate is often empirically estimated at densities which are away from the extinction threshold, where the Allee effect is not well-pronounced. As such, we suggest that in theoretical models the impact of the Allee effect on dynamics should be assessed via comparison with a scenario without an Allee effect, where for both scenarios per capita reproduction rates are assumed to be the same at some `safe' densities. For the current theoretical study, this mathematically signifies that we need to compare the Allee effect in model (\ref{eq:3}) with the same model with $h(y)\equiv 1$ since for large $y$ the value of  $h(y)$ tends to unity, which corresponds to the classical Rosenzweig-MacArthur predator-prey model \cite{kot2001elements}. Note that the model with the Allee effect becomes the Rosenzweig-MacArthur model in the case $\delta \rightarrow 0+$.


Following the above philosophy, our first important conclusion is that introducing the Allee effect in predator's numerical response generally acts as a destabilising factor of a stable coexistence of the prey and the predator in the case where the predator is specialist. Moreover, the Allee effect can result in extinction of predators regardless of initial density. This can be seen from the bifurcation diagrams, when the parameter $\delta$, characterising the strength of the Allee effect, increases from small values $\delta \ll 1$, corresponding to the Rosenzweig-MacArthur predator-prey model, to some large values. Interestingly, destabilising influence of the Allee effect in the predator is observed even for a linear functional response as well (see supplementary material SM4 for detailed illustration). It is well-known that the coexistence equilibrium of the classical Rosenzweig-MacArthur predator-prey model with a linear functional response and a constant $h(y)=const$ is globally stable. Introducing an Allee effect into the predator growth results in stability loss, with either generating sustained oscillations (in this case both prey and predator still persist in the system in an oscillatory mode) or leading to extinction of predator (via different scenarios) for any initial population density. Destabilisation of the system is facilitated with a pronounced saturation in the functional response of the predator (small $\beta$) and with an increase in the strength of the Allee effect - interpreted in the model as a gradual increase in $\delta$. Biologically, this signifies that having  density-dependent $h(y)$ rather than a constant $h(y)=const$ efficiency of predator impedes control over the prey population: the response of the predator to variation of the prey density in the system with the Allee effect becomes delayed and this allows the prey to escape from the control. 

The predicted by the model destabilising role of the Allee effect demonstrates that cyclic population dynamics should occur in predator-prey systems more frequently as it was suggested earlier. For example, in the case of a Holling type I functional response, which is well-known to be stabilising, empirically observed oscillations of population numbers are usually attributed to factors as environmental/demographic noise, seasonal forcing, complex age structure of the population, complexity of the food webs, etc (see \cite{barraquand2017moving} and the references therein). On the other hand, an Allee effect can be an alternative explanation of oscillatory  dynamics of a large number of case studies which occur in non-eutrophic environments with a stabilising functional response of the predator/consumer.

Our second important conclusion is that destabilisation of the coexistence equilibrium in the system with an Allee effect in the predator can occur not only via a supercritical Hopf bifurcation scenario (appearance of a small-amplitude stable limit cycle) but via a subcritical one. In the latter case, destabilisation of the equilibrium will leads to eventual extinction of the predator since the only possible attractor of the model is the state with only prey population being present (regime $R_4$ in the model, Fig.\ref{phaseportraitsHolling}). On the other hand, eutrophication of the environment - which in the dimensionless model corresponds to a decrease in the parameter $\beta$ - would eventually result in extinction of the predator regardless of the scenario of stability loss of the equilibrium (supercritical and subcritical). Indeed, in the case of a supercritical Hopf bifurcation, the resultant predator-prey cycle grows in size and enters the basin of attraction of the prey-only equilibrium: mathematically this occurs via a homoclinical bifurcation.

Our results may have important implications for the biological control of the pests by predators and parasitoids. It has been reported that in a large number of cases,  biological control agents have failed to get established even if under laboratory conditions they could survive by consuming target pest species \cite{bellows2001restoring,roderick2003genes,orr2009biological,bompard2013host}. A possible explanation is the presence of an Allee in the biological control agents which becomes more pronounced in the environment as in a lab. For example, the initial density of the predator can quickly fall below the critical threshold because of dispersal and diffusion. However, there can be a more complicated scenario where the initial density of the predator can be very high, but a pronounced Allee effect will still not allow a long-term persistence of species. Our model mathematically describes this as a globally unstable co-existence state (regime $R_4$ in the model, Fig.\ref{phaseportraitsHolling}). As a conclusion, the choice of the appropriate species for an efficient biological control should be made carefully. For example, for parasitoids it is preferred to use haplodiploid spices (i.e., where the males are haploid and females diploid) to alleviate the negative demographic consequences of mate-finding Allee effects, which is well-pronounced in diploid species \cite{hopper1993mate,bompard2013host}.


Thirdly, we found that inclusion of the Allee effect in the predator growth largely increases the complexity of the system as compared to the scenario without an Allee effect, i.e., the original Rosenzweig-MacArthur predator-prey model. One of the most interesting observation is the possibility of non-trivial multiple attractors, which can be interpreted as alternative ecosystem states. The model predicts two such regimes denoted by $R_7$ and $R_8$, see Fig.\ref{phaseportraitsHolling}. They are (i) a stable equilibrium  coexisting with a stable limit cycle and (ii) two stable limit cycles, respectively. Note that in the literature, there is an ongoing debate on possible origins of alternative states generated through various ecological mechanisms \cite{schroder2005direct}. In the simplest case, alternative stationary states are two contrasting equilibria \cite{scheffer2003floating,scheffer2003catastrophic}. However, more interesting patterns with non-equilibrium coexisting attractors have been reported in the literature as well. In particular, empirical observations demonstrate the possibility of alternative attractors where depending on initial condition the population dynamics can show either cyclic or stable coexistence scenario \cite{zamamiri2001multiple}. Other empirical studies demonstrate the possibility of coexisting cyclic oscillations with contrasting amplitude and periodicity \cite{mccauley1999large,henson2002basins}. Interestingly, in the Daphnia-algae predator-prey system reported in \cite{mccauley1999large}, the observed coexisting cycles were considered to be a consequence of variability of the available food for zooplankton (e.g. a more complicated prey-dependent functional response of the predator or food-dependent efficiency of reproduction), whereas the density dependence of Daphnia's vital rates was somewhat intentionally disregarded. On the other hand, the presence of an Allee effect (e.g. an emergent Allee effect is known to be present in Daphnia \cite{de2003emergent}) can be arguably an alternative explanation of the co-existing cyclic behaviour. 

Finally, we should stress that unlike the scenario with the Allee effect in prey - which is currently considered to be straightforward in the literature - including the Allee effect in predators can be somewhat tricky since this can be done either by modifying the functional response or the numeric response of the predator. Biologically, this signifies that distinct mechanisms of emerging  the resultant  demographic Allee effect - a decrease in reproduction at low population numbers -  should be modelled in a different way. In particular, the mechanisms such as low fertilization efficiency, a lack of mating partners, sperm limitation, cooperative breeding and similar mechanisms (see \cite{berec2007multiple,courchamp2008allee,dennis1989allee}) should be included in the numerical response of predator only, whereas collective exploitation of a resource such as cooperating hunting should be included in both functional and the numerical responses. It is rather surprising that this fact has not been largely explored in the literature yet.

As a first step to fill the existing gap, we compared the stability of two similar predator-prey systems: in one model the Allee effect was due to collective exploitation of resources and in the other one the Allee effect was due to the lack of mating partners (for details see SM4). Note that unlike the influential study \cite{alves2017hunting} we included saturation in the Allee effect in the functional response of predator (see the model equations in SM4). We found that a pronounced Allee effect (large saturation in $h(y)$) has different consequences for the two systems. In particular, an Allee effect due to collective exploitation of resources seems to facilitate  persistence of the predator as compared to the scenario where the Allee effect is included only in numerical response (e.g. due to mate-finding). However, a more detailed comparison of the two mentioned approaches to modelling the Allee effect in predator should be an interesting separate study.

\bibliographystyle{abbrv}
\bibliography{ref}  

\section*{Supplementary Material 1 (SM1)}
\noindent \textbf{Proof of Proposition 2}: Suppose that the interior equilibrium undergoes a generalised Hopf (GH) bifurcation at $\hat{E}\,=\,(\hat{x},\hat{y})$ for the parameter threshold $(\beta_{GH},m_{GH})$. Since $\hat{E}$ is a non-trivial equilibrium point then it must satisfies (3). The Jacobian matrix  at $\hat{E}$ is given by

\begin{eqnarray*}
\label{Gh_Jacobian}
J(\hat{E}) &=& \begin{bmatrix}
\frac{\hat{x}}{\beta + \hat{x}}\left(1-\beta-2\hat{x}\right) & -\frac{\hat{x}}{\beta + \hat{x}}\\[0.5em]
\frac{\alpha \beta \hat{y}h(\hat{y})}{(\beta+\hat{x})^2} & \frac{\alpha \hat{x}\hat{y}}{\beta+\hat{x}}h^{'}(\hat{y})
\end{bmatrix}\quad.
\end{eqnarray*}

\noindent  Now using the condition (GH2), we find from the above Jacobian

\begin{eqnarray*}
\label{eq:Gh2}
\beta_{GH} &=& 1-2\hat{x}+\alpha \hat{y}h^{'}(\hat{y}).
\end{eqnarray*}

\noindent Now to find the first Liapunov number at $\hat{E}$, we translate $\hat{E}$ to origin by using the transformation $u\,=\,x-\hat{x}$, $v\,=\,y-\hat{y}$ and we get

\begin{subequations}
\label{eq:expand}
\begin{align*}
\begin{split}
\dot{u} &= au + bv + P(u,v),
\end{split}\\[0.5em]
\begin{split}
\dot{v} &= cu + dv + Q(u,v),
\end{split}
\end{align*}
\end{subequations}

\noindent where $a\,=\,\left. \frac{\partial F_{1}}{\partial x} \right|_{\hat{E}}$, $b\,=\,\left. \frac{\partial F_{1}}{\partial y} \right|_{\hat{E}}$, $c\,=\,\left. \frac{\partial F_{2}}{\partial x} \right|_{\hat{E}}$, $d\,=\,\left. \frac{\partial F_{2}}{\partial y} \right|_{\hat{E}}$, $P(u,v)$ and $Q(u,v)$ are analytic function is given by

\begin{eqnarray*}
P(u,v) &=& \sum_{i+j\geq 2} a_{ij}u^{i}v^{j},\\[0.5em]
Q(u,v) &=& \sum_{i+j\geq 2} b_{ij}u^{i}v^{j},
\end{eqnarray*}

\noindent where $a_{ij}$ and $b_{ij}$ are obtained from the following relations

\begin{subequations}
\label{eq:Fcoeffi}

\begin{align*}
a_{ij} \,=\, \frac{1}{i!j!} \left. \frac{\partial^{i+j} F_{1}}{\partial x^{i}\partial y^{j}}\right|_{\hat{E}},\,\,\,
b_{ij} \,=\, \frac{1}{i!j!} \left. \frac{\partial^{i+j} F_{2}}{\partial x^{i}\partial y^{j}}\right|_{\hat{E}}.
\end{align*}

\end{subequations}

\noindent Now first Liapunov number is defined by

\begin{align*}
\label{eq:LiapunovNo}
\begin{split}
l &= -\frac{3\pi}{2b\Delta^{\frac{3}{2}}}\left[\{ac(a_{11}^{2}+a_{11}b_{02}+a_{02}b_{11})+ab(b_{11}^2+a_{20}b_{11}+a_{11}b_{02}-2acb_{02}^2
\right.\\[0.5em]
 & \left. -2ab(a_{20}^2-b_{20}b_{02})-b^2(2a_{20}b_{20}+b_{11}b_{20})+(bc-2a^2)(b_{11}b_{02}-a_{11}a_{20})\}\right.\\
 & \left. -(a^2+bc)\{3(cb_{03}-a_{30})+2a(a_{21}+b_{12})+(ca_{12}-bb_{21})\}\right].
\end{split}
\end{align*}

\noindent We calculate the coefficients $a_{ij}$, $b_{ij}$ and $a,b,c,d$ of the first Lyapunov number at $\hat{E}$:

\begin{eqnarray*}
a &=& \frac{\hat{x}}{\beta + \hat{x}}(1-\beta -2\hat{x}), \quad b\,=\,-\frac{\hat{x}}{\beta + \hat{x}}, \quad c\,=\,\frac{\alpha \beta \hat{y}h(\hat{y})}{(\beta + \hat{x})^2}, \quad d\,=\,\frac{\alpha \hat{x}\hat{y}h^{'}(\hat{y})}{\beta + \hat{x}},
\end{eqnarray*}

\begin{eqnarray*}
a_{20} &=& \left(\frac{\beta \hat{y}}{(\beta + \hat{x})^3}-1\right), \quad a_{11}\,=\,-\frac{\beta}{(\beta+\hat{x})^2}, \quad a_{02}\,=\,0,
\end{eqnarray*}

\begin{eqnarray*}
b_{20} &=& -\frac{\alpha \beta \hat{y}h(\hat{y})}{(\beta + \hat{x})^3}, \quad b_{11}\,=\,\frac{\alpha \beta}{(\beta+\hat{x})^2}\left[h(\hat{y})+\hat{y}h^{'}(\hat{y})\right], \quad b_{02}\,=\,\frac{\alpha \hat{x}}{2(\beta+\hat{x})}\left[2h^{'}(\hat{y})+\hat{y}h^{''}(\hat{y})\right],
\end{eqnarray*}

\begin{eqnarray*}
a_{30} &=& -\frac{\beta \hat{y}}{2(\beta+\hat{x})^4}, \quad a_{21}\,=\,\frac{\beta}{(\beta + \hat{x})^3},\quad a_{12}\,=\,0, \quad a_{03}\,=\,0,
\end{eqnarray*}

\begin{eqnarray*}
b_{30} &=& \frac{\alpha \beta \hat{y}h(\hat{y})}{(\beta + \hat{x})^4}, \quad b_{21}\,=\,-\frac{\alpha \beta}{(\beta + \hat{x})^3}\left[h(\hat{y})+\hat{y}h^{'}(\hat{y})\right],
\end{eqnarray*}

\begin{eqnarray*}
b_{12} &=& \frac{\alpha \beta}{2(\beta + \hat{x})^2}\left[2h^{'}(\hat{y})+\hat{y}h^{''}(\hat{y})\right], \quad b_{03}\,=\,\frac{\alpha \hat{x}}{6(\beta + \hat{x})}\left[3h^{''}(\hat{y})+\hat{y}h^{'''}(\hat{y})\right].
\end{eqnarray*}

\noindent After some algebraic calculation we get (assuming that $h(y)$ is as smooth as that $h^{'''}(y)$ exist)

$$l(\hat{E};\beta\,=\,\beta_{GH},\,m\,=\,m_{GH}))\,=\, -\frac{3\Pi}{2b\Delta_{GH}^{\frac{3}{2}}}\left[Am_{GH}^2+Bm_{GH}+C\right],$$

\noindent where

\begin{eqnarray*}
A &=& \left.-\frac{\alpha \beta^2 m^2 y}{(\beta + x)^4}\left[2h^{'}(y)+\frac{y}{2}h^{''}(y)\right]\right|_{(\hat{E};\beta\,=\,\beta_{GH},m\,=\,m_{GH})},
\end{eqnarray*}

\begin{eqnarray*}
B &=& \left[\frac{\beta(1-\beta-2x)}{(\beta+x)^3}\{\frac{\beta^2 y}{(\beta+x)^3}-\frac{2\alpha \beta xyh^{'}(y)}{(\beta+x)^2}-\frac{x}{\beta+x}(\frac{\beta y}{(\beta+x)^3}-1)\}\right.\\
 & & \left.-\frac{\alpha \beta x}{(\beta+x)^3}(2h^{'}(y)+yh^{''}(y))\{\frac{\beta y(1-\beta - 2x)}{2(\beta + x)^2}+\frac{\alpha xy}{2}(1-\beta-2x)\right.\\
 & & \left.(2h^{'}(y)+h^{''}(y))+\frac{xy}{(\beta + x)^2}(1-\beta-2x)-\frac{\alpha \beta y^2 h^{''}(y)}{2(\beta+x)^3}-\frac{2x(1-\beta-2x)}{\beta + x}\}\right.\\
 & & \left.\left.-\frac{\beta x}{(\beta + x)^4}\{x(1-\beta-2x)^2-\frac{\alpha \beta yh(y)}{\beta + x}\}\{\frac{\alpha y}{2}(3h^{''}(y)+yh^{'''}(y))-\frac{1}{\beta+x}\}\right]\right|_{(\hat{E};\beta\,=\,\beta_{GH},m\,=\,m_{GH})},
\end{eqnarray*}

\begin{eqnarray*}
C &=& -\frac{x^2}{(\beta + x)^2}(1-\beta-2x)\left[\frac{\alpha^2\beta^2y^2(h^(y))^2}{(\beta+x)^4}+\frac{\alpha \beta yh^{'}(y)}{(\beta+x)^2}\left(\frac{\beta y}{(\beta+x)^3}-1\right)-\frac{\alpha \beta x}{2(\beta+x)^3}\right.\\
& & \left.(2h^{'}(y)+yh^{''}(y))+2\left(\frac{\beta y}{(\beta+x)^3}-1\right)^2+\frac{2\alpha^2\beta xy(1-\beta-2x)h^{'}(y)}{(\beta+x)^3}-\frac{x}{(\beta+x)^5}\right.\\
& & \left. \{x(1-\beta-2x)^2-\frac{\alpha \beta yh(y)}{\beta+x}\}\{-\frac{3\beta xy}{(\beta+x)^2}\right.\\
& & \left.\left.+2\beta x(1-\beta -2x)(\frac{1}{\beta+x}+\frac{\alpha}{2}(2h^{'}(y)+yh^{''}(y)))-\frac{\alpha \beta xyh^{'}(y)}{\beta+x}\}\right]\right|_{(\hat{E};\beta\,=\,\beta_{GH},m\,=\,m_{GH})}.
\end{eqnarray*}

\noindent  Due to algebraic complexity we are not able to show that $l((\hat{E};\beta\,=\,\beta_{GH},m\,=\,m_{GH}))\,=\,0$,  however it can be verified numerically. Thus we can conclude that the system (2) admits GH bifurcation for the parameter threshold $(\beta\,=\,\beta_{GH},m\,=\,m_{GH})$ at $\hat{E}$.\\[1em]

\section*{Supplementary Material 2 (SM2)}

\noindent \textbf{Proof of Proposition 3:} The conditions (BT1) and (BT2) are equivalent to the following ones

\begin{eqnarray*}\label{eq:Bt_det}
\left. \bar{x}h^{'}(\bar{y})(1-\beta-2\bar{x})+\frac{\beta
h(\bar{y})}{\beta+\bar{x}}\right|_{(\beta_{BT},m_{BT})}\,=\,0,
\end{eqnarray*}

\noindent and

\begin{eqnarray*}\label{eq:Bt_trace}
\left.1-\beta-2\bar{x}+\alpha \bar{y}
h^{'}(\bar{y})\right|_{(\beta_{BT},m_{BT})} &=& 0.
\end{eqnarray*}

\noindent It is difficult to find the explicit expressions of the bifurcation parameters for the BT bifurcation thresholds since in the above equations contain unknown function both $h(y)$ and the coordinates of the equilibrium points contain the parameters implicitly. However, we can define the bifurcation parameters as follows:

\begin{eqnarray*}
\beta_{BT} &=& 1-2\bar{x}+\alpha \bar{y}h^{'}(\bar{y}),\\[0.5em]
m_{BT} &=&
\alpha\left[\bar{x}h^{'}(\bar{y})(1-\beta_{BT}-2\bar{x})+h(\bar{y})\right].
\end{eqnarray*}

\noindent Let us consider the small perturbation around the bifurcation threshold $(\beta_{BT},m_{BT})$ by $(\beta_{BT}+\lambda_{1},m_{BT}+\lambda_{2})$, where $\lambda_{i},\,i\,=\,1,2$ are sufficiently small, and substituting it into the system we get

\begin{subequations} \label{eq:BT1}
\begin{align*}
\begin{split}
\frac{dx}{dt}&=x\left(1-x\right)-\frac{xy}{\beta_{BT} + \lambda_{1} + x}\,\equiv\,F_{1}(x,y,\lambda_{1}),
\end{split}\\
\begin{split}
\frac{dy}{dt}&=\frac{\alpha xy}{\beta_{BT} + \lambda_{1} + x}h(y) - \left(m_{BT}+\lambda_{2}\right)y\,\equiv\,F_{2}(x,y,\lambda_{1},\lambda_{2}),
\end{split}
\end{align*}
\end{subequations}

\noindent Now we shift the equilibrium $\bar{E}\,=\,(\bar{x},\bar{y})$ to the origin by the coordinate transform $X\,=\,x-\bar{x}$, $Y\,=\,y-\bar{y}$ and substitute it in model (2) to obtain

\begin{subequations}
\label{eq:BT2}
\begin{align*}
\begin{split}
\dot{X} &= F_{10} + \tilde{a}X + \tilde{b}Y + p_{20}X^2 + p_{11}XY + H_{1}(X,Y),
\end{split}\\[0.5em]
\begin{split}
\dot{Y} &= F_{20} + \tilde{c}X + \tilde{d}Y + q_{20}X^2 + q_{11}XY + q_{02}Y^2 + H_{2}(X,Y),
\end{split}
\end{align*}
\end{subequations}

\noindent where

\begin{eqnarray*}
F_{10} &=& \bar{x}\left(1-\bar{x}\right)-\frac{\bar{x}\bar{y}}{\beta_{BT} + \lambda_{1} + \bar{x}},\quad \tilde{a} \,=\, 1-2\bar{x}-\frac{\bar{y}}{\beta_{BT} + \lambda_{1} + x} + \frac{\bar{x}\bar{y}}{(\beta_{BT} + \lambda_{1} + x)^2},\\[0.2em]
\tilde{b} &=& -\frac{\bar{x}}{\beta_{BT} + \lambda_{1} + \bar{x}},\quad p_{20}\,=\, -\left(1-\frac{\bar{y}}{\beta_{BT} + \lambda_{1} + \bar{x}}+\frac{\bar{x}\bar{y}}{(\beta_{BT} + \lambda_{1} + \bar{x})^2}\right),\\[0.3em]
p_{11} &=& -\frac{\beta_{BT}+\lambda_{1}}{(\beta_{BT}+\lambda_{1}+\bar{x})^2},\quad F_{20}\,=\,\frac{\alpha \bar{x}\bar{y}}{\beta_{BT} + \lambda_{1} + \bar{x}}h(\bar{y}) - \left(m+\lambda_{2}\bar{y}\right),\\[0.3em]
\tilde{c} &=& \frac{\alpha(\beta_{BT}+\lambda_{1})\bar{y}h(\bar{y})}{(\beta_{BT}+\lambda_{1}+\bar{x})^2},\quad \tilde{d}\,=\, \frac{\alpha\bar{x}}{\beta_{BT}+\lambda_{1}+\bar{x}}\left(h(\bar{y})+\bar{y}h^{'}(\bar{y})\right)-(m_{BT}+\lambda_{2}),\\[0.3em]
q_{20} &=& -\frac{\alpha(\beta_{BT}+\lambda_{1})\bar{y}h(\bar{y})}{(\beta_{BT}+\lambda_{1}+\bar{x})^3},\quad q_{11}\,=\,\frac{\alpha(\beta_{BT}+\lambda_{1})}{(\beta_{BT}+\lambda_{1}+\bar{x})^2}\left[h(\bar{y})+\bar{y}h^{'}(\bar{y})\right],\\[0.3em]
q_{02} &=& \frac{\alpha \bar{x}}{\beta_{BT}+\lambda_{1}+\bar{x}}\left[2h^{'}(\bar{y})+\bar{y}h^{''}(\bar{y})\right],
\end{eqnarray*}

\noindent and $H_{1}$, $H_{2}$ are power series of $X$ and $Y$ with terms $X^{i}Y^{j},\,i+j\geq 3$. Now by using affine transformation $X_{1}\,=\,X$, $X_{2}\,=\,\tilde{a}X+\tilde{b}Y$, the system transferred to

\begin{subequations}
\label{eq:BT3}
\begin{align*}
\begin{split}
\dot{X}_{1} &= F_{10} + X_{2} + l_{20}X_{1}^2+l_{11}X_{1}X_{2} + \tilde{H}_{1}(X_{1},X_{2}),
\end{split}\\[0.5em]
\begin{split}
\dot{X}_{2} &= k_{0} + k_{10}X_{1} + k_{01}X_{2} + k_{20}X_{1}^{2} + k_{11}X_{1}X_{2} + k_{02}X_{2}^{2} + \tilde{H}_{2}(X_{1},X_{2}),
\end{split}
\end{align*}
\end{subequations}

where,

\begin{eqnarray*}
l_{20} &=& p_{20}-\frac{\tilde{a}}{\tilde{b}}p_{11},\quad l_{11}\,=\, \frac{p_{11}}{\tilde{b}},\\[0.3em]
k_{0} &=& \tilde{a}F_{10}+\tilde{b}F_{20},\quad k_{10}\,=\,\tilde{b}\tilde{c}-\tilde{a}\tilde{d},\quad k_{01}\,=\,\tilde{a}+\tilde{d},\\[0.2em]
k_{20} &=& \tilde{a}p_{20}-\frac{\tilde{a}}{\tilde{b}}p_{11} + \tilde{b}q_{20} - \tilde{a}q_{11} + \frac{\tilde{a}^2}{\tilde{b}}q_{02},\\[0.3em]
k_{11} &=& \frac{\tilde{a}}{\tilde{b}}p_{11} + q_{11} - 2\frac{\tilde{a}}{\tilde{b}}q_{02},\quad k_{02}\,=\,\frac{q_{02}}{\tilde{b}},
\end{eqnarray*}

\noindent and $\tilde{H}_{1}$, $\tilde{H}_{2}$ are power series of $X_{1}$ and $X_{2}$ with terms $X_{1}^{i}X_{2}^{j},\,i+j\geq 3$. Now using the $C^{\infty}$ change of coordinates in the neighbourhood of origin by the following transformation

$$v_{1}\,=\,X_{1}-\frac{l_{11}+k_{02}}{2}X_{1}^2, \quad v_{2}\,=\,X_{2}+l_{20}X_{1}^2-k_{02}X_{1}X_{2},$$

\noindent the system reduces to

\begin{subequations}
\label{eq:BT4}
\begin{align*}
\begin{split}
\dot{v}_{1} &= F_{10} + s_{10}v_{1} + v_{2} + s_{20}v_{1}^2+ R_{1}(v_{1},v_{2}),
\end{split}\\[0.5em]
\begin{split}
\dot{X}_{2} &= k_{0} + t_{10}v_{1} + t_{01}v_{2} + t_{20}v_{1}^{2} + t_{11}v_{1}v_{2} + R_{2}(v_{1},v_{2}),
\end{split}
\end{align*}
\end{subequations}

\noindent where

\begin{eqnarray*}
s_{10} &=& -\frac{F_{10}}{\tilde{b}}(p_{11}+q_{02}),\quad s_{20}\,=\,-\frac{F_{10}}{2\tilde{b}^2}(p_{11}+q_{02})^2,\\[0.2em]
t_{10} &=& k_{10}+2F_{10}l_{20}-k_{0}k_{02},\quad t_{01}\,=\,k_{01}-\frac{F_{10}}{\tilde{b}}q_{02},\\[0.2em]
t_{20} &=& \frac{k_{10}}{2}(l_{11}+k_{02})-k_{01}l_{20}+k_{20}+F_{10}l_{20}(l_{11}+k_{02})-k_{01}k_{02}-\frac{k_{0}k_{02}}{2}(l_{11}+k_{02}) + F_{10}l_{20}k_{02},\\[0.2em]
t_{11} &=& k_{01}k_{02}+k_{11}+2l_{20}-k_{01}k_{02}-k_{02}^2F_{10},
\end{eqnarray*}

\noindent and $R_{1}$, $R_{2}$ are power series of $v_{1}$ and $v_{2}$ with terms $v_{1}^{i}v_{2}^{j},\,i+j\geq 3$.

Now again using the $C^{\infty}$ change of coordinates in the neighbourhood of origin by the following transformation

$$z_{1}\,=\,v_{1}, \quad z_{2}\,=\,F_{10}+s_{10}v_{1}+v_{2}+s_{20}v_{1}^2,$$

\noindent the system becomes reduced to

\begin{subequations}
\label{eq:BT5}
\begin{align*}
\begin{split}
\dot{z}_{1} &= z_{2} + \tilde{R}_{1}(z_{1},z_{2}),
\end{split}\\[0.5em]
\begin{split}
\dot{z}_{2} &= w_{0} + w_{10}z_{1} + w_{01}z_{2} + w_{20}z_{1}^{2} + w_{11}z_{1}z_{2} + \tilde{R}_{2}(z_{1},z_{2}),
\end{split}
\end{align*}
\end{subequations}

\noindent where

\begin{eqnarray*}
w_{0} &=& t_{0}-t_{01}F_{10},\quad w_{10}\,=\,t_{10}-t_{01}s_{10}-t_{11}F_{0},\quad w_{01}\,=\,s_{10}+t_{01},\\[0.2em]
w_{20} &=& t_{20}-t_{01}s_{20}-t_{11}s_{10},\quad w_{11}\,=\,t_{11}+2s_{20},
\end{eqnarray*}

\noindent and $\tilde{R}_{1}$, $\tilde{R}_{2}$ are power series of $z_{1}$ and $z_{2}$ with terms $z_{1}^{i}z_{2}^{j},\,i+j\geq 3$. Finally, using the $C^{\infty}$ change of coordinates in the neighbourhood of origin by the following transformation

$$n_{1}\,=\,z_{1},\quad n_{2}\,=\,z_{2}+\tilde{R}_{1}(z_{1},z_{2}),$$

\noindent the system changes to

\begin{subequations}
\label{eq:BT6}
\begin{align*}
\begin{split}
\dot{n}_{1} &= n_{2},
\end{split}\\[0.5em]
\begin{split}
\dot{n}_{2} &= P(n_{1},\lambda_{1},\lambda_{2}) + n_{2}\phi(n_{1},\lambda_{1},\lambda_{2}) + n_{2}^{2}\xi(n_{1},n_{2},\lambda_{1},\lambda_{2}),
\end{split}
\end{align*}
\end{subequations}

\noindent where $P,\,\phi,\,\xi\,\in\,C^{\infty}$ and

\begin{eqnarray*}
P(n_{1},\lambda_{1},\lambda_{2}) &=& w_{0} + w_{10}n_{1} + w_{20}n_{1}^{2} + P_{1}(n_{1}),\\[0.2em]
\phi(n_{1},\lambda_{1},\lambda_{2}) &=& w_{01} + w_{11}n_{1} + \phi_{1}(n_{1}),\\[0.2em]
\xi(n_{1},n_{2},\lambda_{1},\lambda_{2}) &=& \xi_{1}(n_{1},n_{2}).
\end{eqnarray*}

\noindent Here $P_{1}$, $\phi_{1}$, $\xi_{1}$ are the power series in the terms $n_{1}^{i},\,i\,\geq\,3$, $n_{1}^{i},\,i\,\geq\,2$ and $n_{1}^{i}n_{2}^{j},\,i+j\,\geq\,1$ respectively. Now applying Malgrange preparation theorem to the function $P$, we get

$$P(n_{1},\lambda_{1},\lambda_{2})\,=\,\left(n_{1}^2+\frac{w_{10}}{w_{20}}n_{1}+\frac{w_{0}}{w_{20}}\right)B(n_{1},\lambda_{1},\lambda_{2}),$$

\noindent where $B$ is a power series of $n_{1}$ and $B(0,\lambda_{1},\lambda_{2})\,=\,w_{20}$. Now using the transformation $u_{1}\,=\,n_{1}$, $u_{2}\,=\,\frac{n_{2}}{\sqrt{B(n_{1},\lambda_{1},\lambda_{2})}}$ and $T\,=\,\int_{0}^{t} \sqrt{B(n_{1}(s),\lambda_{1},\lambda_{2})} ds$, the system reduces to

\begin{subequations}
\label{eq:BT7}
\begin{align*}
\begin{split}
\dot{u}_{1} &= u_{2},
\end{split}\\[0.5em]
\begin{split}
\dot{u}_{2} &= \frac{w_{0}}{w_{20}} + \frac{w_{10}}{w_{20}}u_{1} + \frac{w_{01}}{\sqrt{w_{20}}}u_{2} + u_{1}^{2} + \frac{w_{11}}{\sqrt{w_{20}}}u_{1}u_{2} + Q_{1}(u{1},u_{2},\lambda_{1},\lambda_{2}),
\end{split}
\end{align*}

\noindent where $Q_{1}$ is a power series in $(u_{1},u_{2})$ with power $u_{1}^{i}u_{2}^{j},\,i+j\,\geq\,3$ and $j\,\geq\,2$. Now applying the final transformation
\end{subequations}

$$y_{1}\,=\,u_{1} + \frac{w_{10}}{2w_{20}},\quad y_{2}\,=\,u_{2},$$

\noindent the system transforms to

\begin{subequations}
\label{eq:BT8}
\begin{align*}
\begin{split}
\dot{y}_{1} &= y_{2},
\end{split}\\[0.5em]
\begin{split}
\dot{y}_{2} &= \mu_{1}(\lambda_{1},\lambda_{2}) + \mu_{2}(\lambda_{1},\lambda_{2})y_{2} + y_{1}^{2} + \frac{w_{11}}{\sqrt{w_{20}}} + Q_{2}(u{1},u_{2},\lambda_{1},\lambda_{2}),
\end{split}
\end{align*}
\end{subequations}

\noindent where

\begin{eqnarray*}
\mu_{1}(\lambda_{1},\lambda_{2}) &=& \frac{w_{0}}{w_{20}}-\left(\frac{w_{10}}{2w_{20}}\right)^2,\\[2em]
\mu_{2}(\lambda_{1},\lambda_{2}) &=& \frac{w_{01}}{\sqrt{w_{20}}}-\frac{w_{10}w_{11}}{2w_{20}\sqrt{w_{20}}},
\end{eqnarray*}

\noindent and $Q_{2}$ is a power series in $y_{1}$ and $y_{2}$ of the form $y_{1}^{i}y_{2}^{j},\,i+j\,\geq\,3$.

To check the non-degeneracy conditions for BT bifurcation at $\bar{E}$ we need to check the signs of $k_{20}$ and $l_{20}+k_{11}$ for $\lambda_{1}\,=\,0$ and $\lambda_{2}\,=\,0$ i.e. at $\beta\,=\,\beta_{BT}$ and $m\,=\,m_{BT}$. Now

\begin{eqnarray*}
k_{20} &=& aa_{20}-\frac{a^2}{b}a_{11}+bb_{20}-ab_{11}+\frac{a^2}{b}\\[0.5em]
 &=& \left[-\frac{\bar{x}^2}{(\beta_{BT} + \bar{x})^2}\left(1-\beta_{BT}-2\bar{x}\right) + \frac{\alpha \beta_{BT} \bar{x} h(\bar{y})}{(\beta_{BT}+\bar{x})^2} - \frac{\alpha \bar{x}}{(\beta_{BT}+\bar{x})^2}(1-\beta_{BT}-2\bar{x})\right.\\
 & & \left.h^{'}(\bar{y})\left\lbrace \beta_{BT}(1-\bar{x})-2\bar{x}(1-\beta_{BT}-2\bar{x})+\bar{x}\bar{y}(1-\beta_{BT}-2\bar{x})h^{''}(\bar{y})\right\rbrace\right],
\end{eqnarray*}

\noindent where $a$, $b$, $c$, $d$, $a_{ij}$, $b_{ij}$, $i+j\,\geq\,1$ are same as in the Supplementary Material A except one needs to calculate the coefficients at $\bar{E}$. We have $1-\beta_{BT}-2\bar{x} < 0$, as $\alpha$ and $h(y)$ is positive. Using this with the condition (A4) we can find $k_{20} > 0$

\begin{eqnarray*}
l_{20} &=& a_{20}-\frac{a}{b}a_{11}\,=\,-\frac{\bar{x}}{\beta_{BT}+\bar{x}} < 0
\end{eqnarray*}
\noindent and

\begin{eqnarray*}
k_{11} &=& \frac{a}{b}a_{11}+b_{11}-\frac{2a}{b}b_{02}\\[0.5em]
 &=& (1-\beta_{BT}-2\bar{x})\left[\frac{\beta_{BT}}{(\beta_{BT}+\bar{x})^2}+\alpha \bar{x}(1-\bar{x})h''(\bar{y})\right]+\alpha (1-2\bar{x})h'(\bar{y}).
\end{eqnarray*}

\noindent Now $\textup{sign}(k_{11})$ may changes from negative to positive or vice versa. Therefore $l_{20} + k_{11}$ may or may not be zero. If $l_{20} + k_{11}\,\neq\,0$ then the system near the BT bifurcation point is topologically equivalent to the following one
\begin{subequations}
\begin{align*}
\begin{split}
\dot{x}_{1} &= x_{2},
\end{split}\\[0.2em]
\begin{split}
\dot{x}_{2} &= \mu_{1} + \mu_{2}x_{2} +  x_{1}^2 \pm x_{1}x_{2},
\end{split}
\end{align*}
\end{subequations}
\noindent which is the normal form of the BT bifurcation of co-dimension two. The coefficient in front of $x_{1}x_{2}$ can be either $+1$ or $-1$ according as $\textup{sign}(k_{20}(l_{20}+k_{11}))$ . If the sign is positive then the predator-prey model undergoes a subcritical BT bifurcation at $\bar{E}$ and if the sign is negative then the model undergoes supercritical BT bifurcation at $\bar{E}$.

 In the case where $l_{20} + k_{11}\,= 0$, we have a co-dimension 3 Bogdanov-Takens bifurcation and the system is topologically equivalent to the following one
  \begin{subequations}
\begin{align*}
\begin{split}
\dot{x}_{1} &= x_{2},
\end{split}\\[0.2em]
\begin{split}
\dot{x}_{2} &= \mu_{1} + \mu_{2}x_{2} + \mu_{3} x_{1}x_{2}+ x_{1}^2 \pm x_{1}^3 x_{2}.
\end{split}
\end{align*}
\end{subequations}

Note that this type of co-dimension 3 Bogdanov-Takens bifurcation includes a double-equilibrium  point.

\section*{Supplementary Material 3 (SM3)}

Here we resent several $\alpha$ cross sections of the $(\alpha,\delta,m)$ bifurcation diagrams shown in Figs. 6 am 7 of the main text. We consider the case where parametrization of $h(y)$ is given either the Ivlev or the trigonometric response.

\begin{figure}[H]
\centerline{
\mbox{\subfigure[]{\includegraphics[width=8cm,height=6cm]{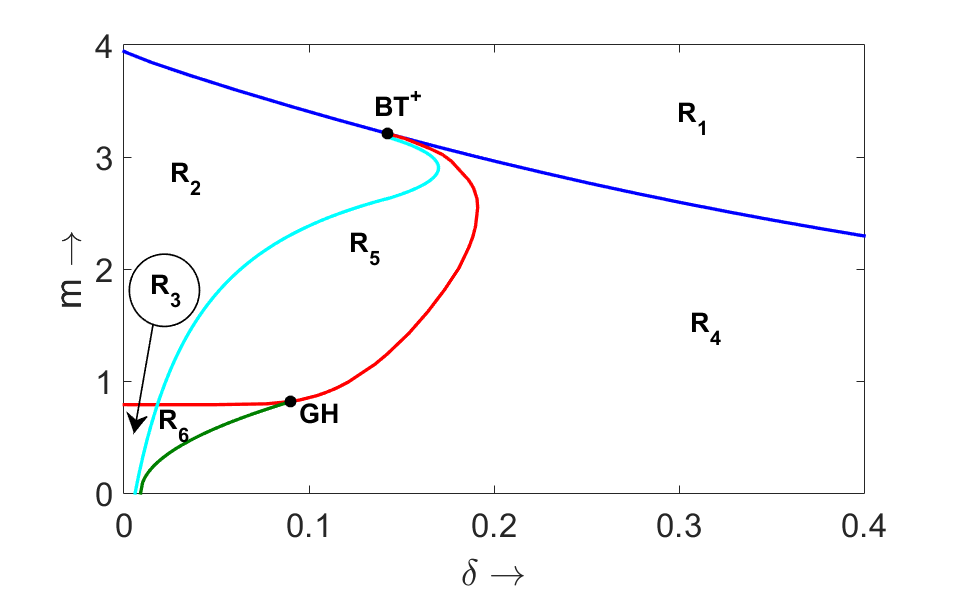}}
\subfigure[]{\includegraphics[width=8cm,height=6cm]{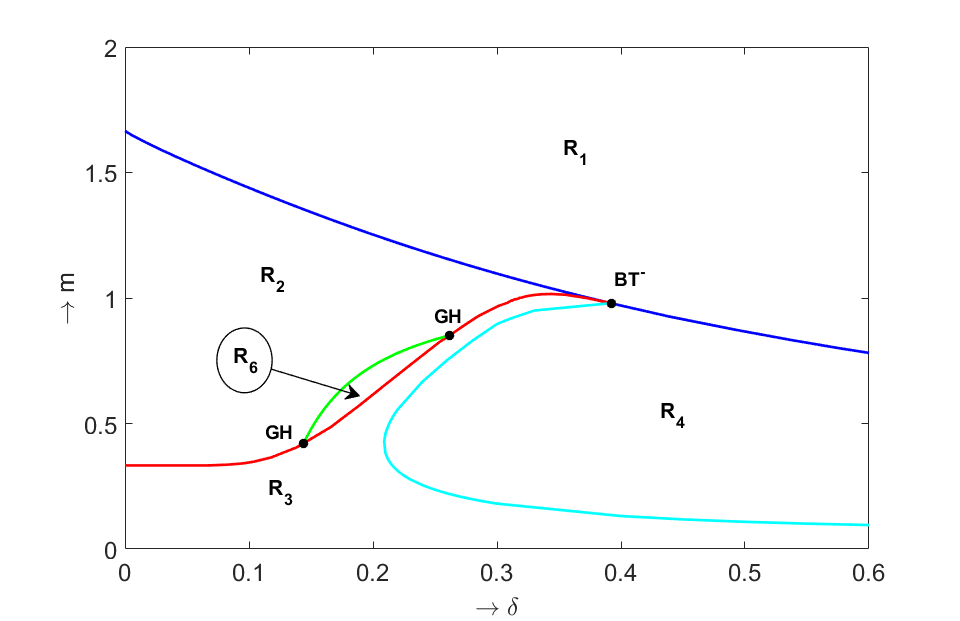}}}}
\caption{S. Bifurcation diagrams in $\delta-m$ parametric plane for the
Ivlev parametrization of Allee effect function. Parameter values are
(a) $\alpha\,=\,7.1$, $\beta\,=\,0.8$; (b) $\alpha\,=\,3$,
$\beta\,=\,0.8$. The dark blue curve is saddle-node bifurcation
curve, the red curve is a Hopf bifurcation curve, the green curve is
the curve of a saddle-node bifurcation of limit cycles. The cyan
curve describes a homoclinic bifurcation.}\label{fig:suppl1}
\end{figure}

\begin{figure}[H]
\centerline{
\mbox{\subfigure[]{\includegraphics[width=8cm,height=6cm]{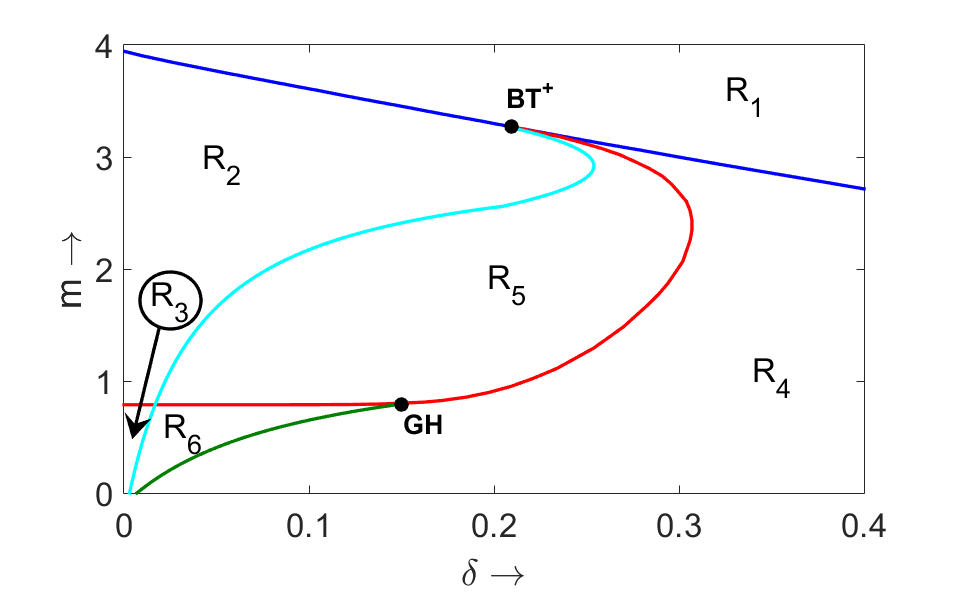}}
\subfigure[]{\includegraphics[width=8cm,height=6cm]{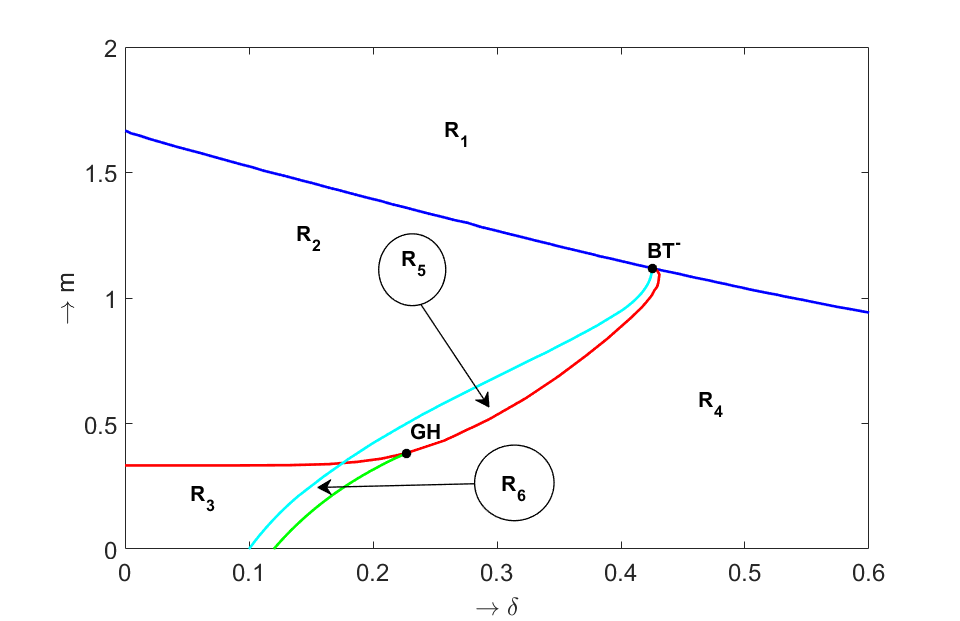}}}}
\caption{S. Bifurcation diagrams in $\delta-m$ parametric plane for
the tan-hyperbolic parametrization of $h(y)$. Parameter values are
(a) $\alpha\,=\,7.1$, $\beta\,=\,0.8$; (b) $\alpha\,=\,3$,
$\beta\,=\,0.8$. The dark blue curve is saddle-node bifurcation
curve, the red curve is a Hopf bifurcation curve, the green curve is
the curve of a saddle-node bifurcation of limit cycles. The cyan
curve describes a homoclinic bifurcation.}\label{fig:suppl2}
\end{figure}

Firstly, we present the bifurcation diagrams for the model with the Ivlev response constructed for $\alpha\,=\,7.1$ and $\alpha\,=\,3$ which is shown in Fig.~1S. The bifurcation diagrams contain saddle-node bifurcation curve (blue), Hopf-bifurcation curve (red) and two global bifurcation curves such as the homoclinic bifurcation curve (cyan) and the saddle-node bifurcation of limit cycles (green). Two local bifurcation curves intersect at the Bogdanov-Takens(BT) bifurcation point and generalised Hopf bifurcation (GH) points are located on the Hopf bifurcation curve. The homoclinic bifurcation curve and the saddle-node bifurcation curve of limit cycles emerge from the BT and GH points, respectively. Note that for $\alpha\,=\,3$, we find two GH points which are connected by a global bifurcation curve. These four curves divide the parametric domains into six and five regions $R_j$ respectively. Regions are described in detail in the main text.

Secondly, we present the bifurcation diagrams constructed for the hyperbolic tangent parametrisation of $h(y)$ which are shown in Figs.~2S. We consider the same values of the parameter $\alpha$. The meaning of the curves and domains are the same as in the previous figure as well as in the main text. One can see that although the mutual position and the shape of bifurcation curves are similar to the Ivlev parameterisations, their shape is slightly different as those in Figs. 1S.

\section*{Supplementary Material 4 (SM4)}

Here we will discuss the dynamics of two prey-predator models consisting the Allee effect in the predator: one with the Allee function in both functional and numerical responses (System \eqref{eq:Allee_both}) and another with the Allee function in numerical response only (System \eqref{eq:Allee_numerical}). First one describes the Allee effect due to collective exploitation of resources and another describes the same due to the lack of mating partner. We consider RM model with linear functional response and choose Monod parametrisation ($h(y)\,=\,\frac{y}{y+\delta}$) for the Allee function. Without loss of generality, we consider dimensionless model and the evolving equations are in the following:

\begin{equation}
\label{eq:Allee_both}
    \begin{split}
    \dot{x} &= x(1-x)-\frac{xy^2}{y+\delta}\\
    \dot{y} &= \frac{\alpha xy^2}{y+\delta}-my
    \end{split}
\end{equation}

\begin{equation}
\label{eq:Allee_numerical}
    \begin{split}
    \dot{x} &= x(1-x)-xy\\
    \dot{y} &= \frac{\alpha xy^2}{y+\delta}-my
    \end{split}
\end{equation}

 To illustrates the dynamics of the above models, we construct bifurcation diagrams for both models in $\delta$-$m$ parametric plane for fixed $\alpha\,=\,8$. Fig.~\ref{fig:suppl3}S present the bifurcation portraits of the model \eqref{eq:Allee_both} consisting three local bifurcation curves: one saddle-node bifurcation curve (blue), two Hopf-bifurcation curves (red) and two non-local bifurcation curves, known as homoclinic bifurcation (cyan). Also the bifurcation diagram consists two BT points, corresponds to Bogdanov-Takens bifurcation of co-dimension 2, where Hopf-bifurcation curve meets tangentially to the saddle-node bifurcation curve global bifurcation curve and a global bifurcation curve is emerging from it. These curves divide the parametric plane into 6 regions ($R_{1}-R_{5}$) with one common region ($R_{2}$). We label this regions as the same way we did for other bifurcation portraits. Secondly we construct the bifurcation diagram of the model \eqref{eq:Allee_numerical} in Fig.~\ref{fig:suppl4}S when Allee function occurs in the numerical response only. This diagram consists same number of local and global bifurcation curves as previous bifurcation diagram (Fig.~\ref{fig:suppl3}S), rather these diagrams looks topologically equivalent to each other. 

\begin{figure}[ht]
    \centering
    \mbox{\subfigure[]{\includegraphics[width=0.45\textwidth]{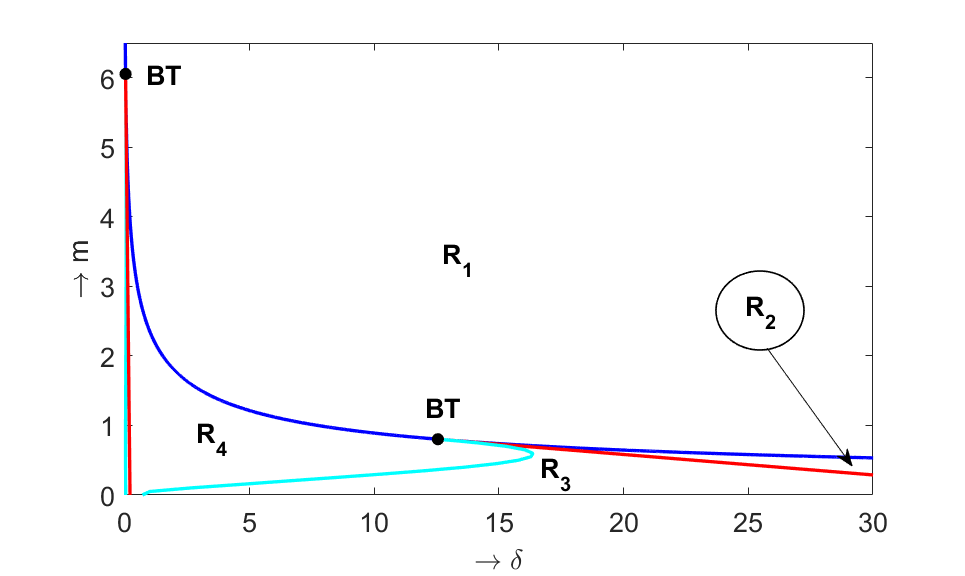}}
    \quad
    \subfigure[]{\includegraphics[width=0.45\textwidth]{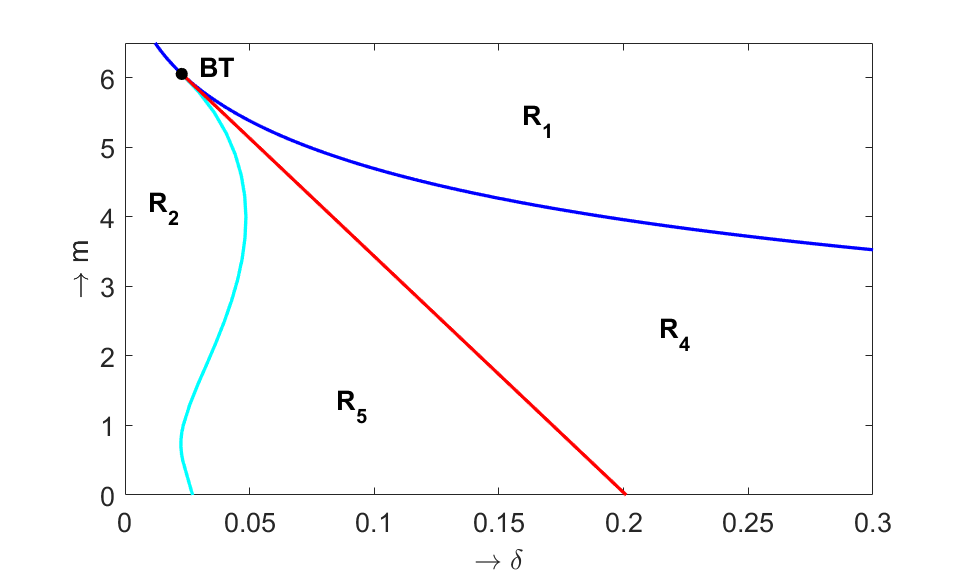}}}
    \caption{S. (a) Two-parametric bifurcation diagram of the system \eqref{eq:Allee_both} in $\delta$-$m$ parametric plane at $\alpha\,=\,8$, (b) zoomed version (a) near origin. Description of all the curves are given in the text.} 
    \label{fig:suppl3}
\end{figure}

\begin{figure}[ht]
    \centering
    \mbox{\subfigure[]{\includegraphics[width=0.45\textwidth]{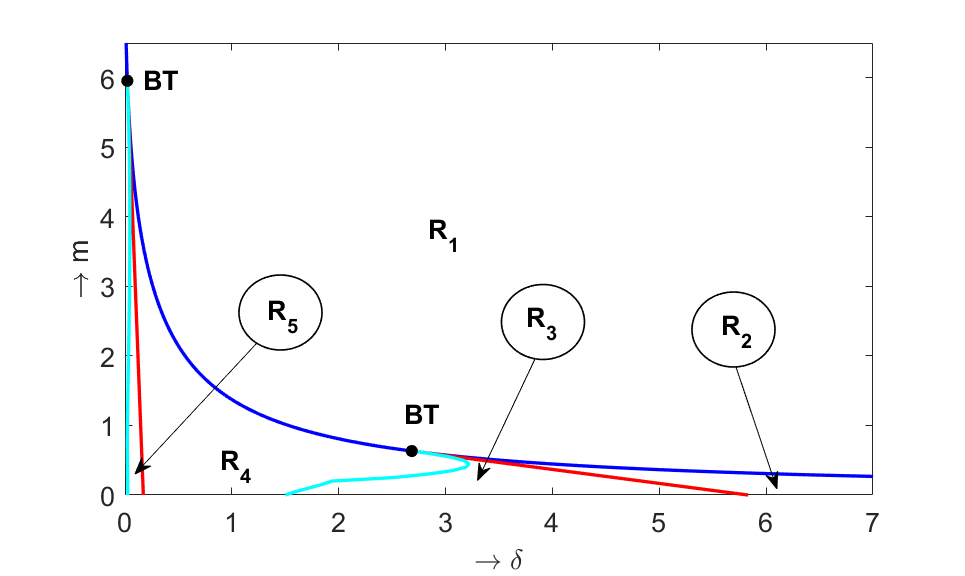}}
    \quad
    \subfigure[]{\includegraphics[width=0.45\textwidth]{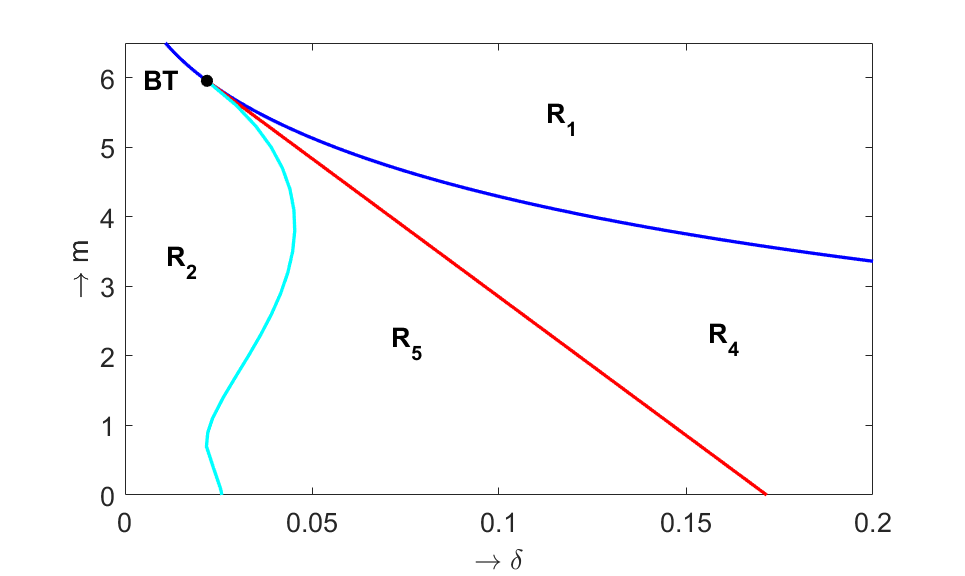}}}
    \caption{S. (a) Two-parametric bifurcation diagram of the system \eqref{eq:Allee_numerical} in $\delta$-$m$ parametric plane at $\alpha\,=\,8$, (b) zoomed version (a) near origin. Description of all the curves are given in the text.} 
    \label{fig:suppl4}
\end{figure}

\begin{figure}[H]
    \centering
    \mbox{\subfigure[]{\includegraphics[width=0.45\textwidth]{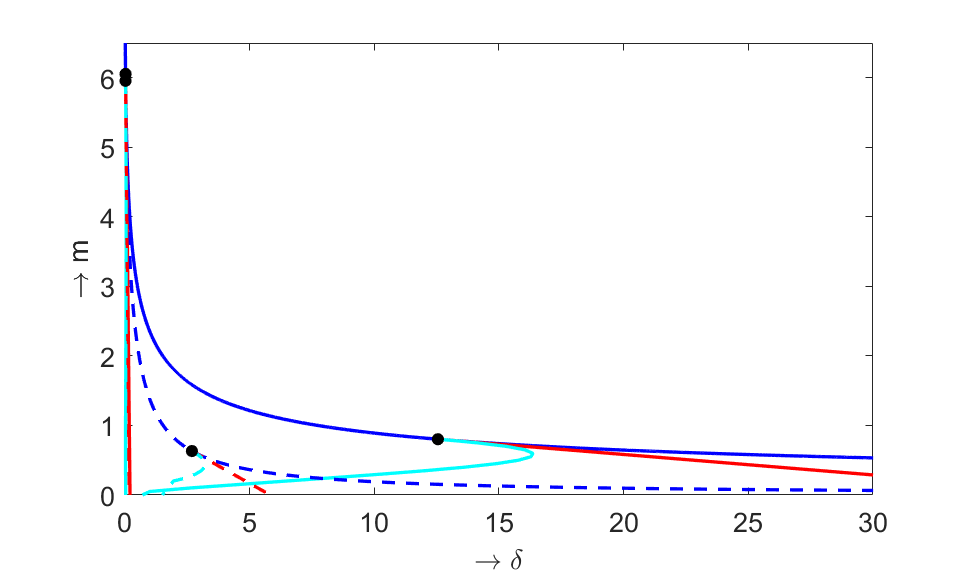}}
    \quad
    \subfigure[]{\includegraphics[width=0.45\textwidth]{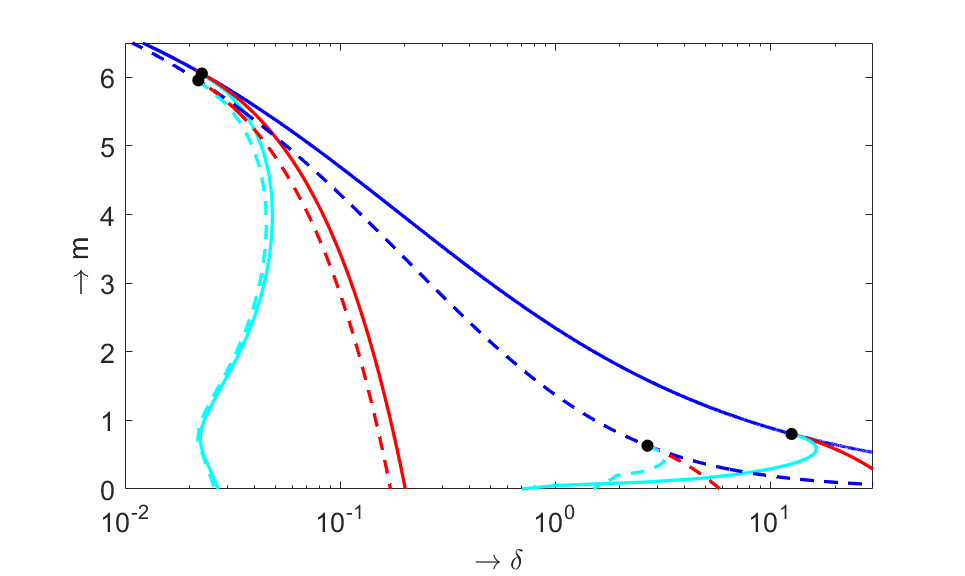}}}
    \caption{S. (a) Combined two-parametric bifurcation diagram of the system \eqref{eq:Allee_both} and \eqref{eq:Allee_numerical} in $\delta$-$m$ parametric plane at fixed $\alpha\,=\,8$, (b) same bifurcation diagram with $x$-axis in logarithmic scale. Solid curves corresponds to system \eqref{eq:Allee_both} and dashed curve corresponds to system \eqref{eq:Allee_numerical}} 
    \label{fig:suppl5} 
\end{figure}

In Fig.~\ref{fig:suppl5}S we have plotted all the local and global bifurcation curves for the two systems compositely to compare their dynamics. We use solid lines for the system \eqref{eq:Allee_both} with Allee function in both functional and numerical responses and dashed line for the other. It is clearly seen that diagrams look like same except for the area of some regions ($R_{i}$) being increased or decreased. The way of appearance of the local and global bifurcation curves are almost same in both cases which leads the diagrams look like similar. But one noticeable fact is that  predator in system \eqref{eq:Allee_both} can persists in larger domain (in $R_{1}$) compared to the system \eqref{eq:Allee_numerical} with Allee function in numerical response only. Thus the mechanism through which Allee effect appears into the predator growth, either due to exploitation of resources or lack of mating partner or something else, has a strong influence on predator persistence and can be considered as separate interesting study.






\end{document}